
\documentclass[journal, draftclsnofoot,oneside, onecolumn,11pt, letterpaper]{IEEEtran}

\ifCLASSINFOpdf
  % \usepackage[pdftex]{graphicx}
  % declare the path(s) where your graphic files are
  % \graphicspath{{../pdf/}{../jpeg/}}
  % and their extensions so you won't have to specify these with
  % every instance of \includegraphics
  % \DeclareGraphicsExtensions{.pdf,.jpeg,.png}
\else
  % or other class option (dvipsone, dvipdf, if not using dvips). graphicx
  % will default to the driver specified in the system graphics.cfg if no
  % driver is specified.
  % \usepackage[dvips]{graphicx}
  % declare the path(s) where your graphic files are
  % \graphicspath{{../eps/}}
  % and their extensions so you won't have to specify these with
  % every instance of \includegraphics
  % \DeclareGraphicsExtensions{.eps}
\fi
\usepackage{pslatex}
\usepackage{amsfonts,color,morefloats}
\usepackage{amssymb,amsmath,latexsym,amsthm}
\usepackage{bbm}
\usepackage{amsfonts}
\usepackage{mathrsfs}
\usepackage{amsthm}
\usepackage{latexsym}
\usepackage{color}
\usepackage{dcolumn}
\usepackage{multicol}
\usepackage{extarrows}
\usepackage{amsmath}

\allowdisplaybreaks[4]

\newtheorem{theorem}{Theorem}
\newtheorem{lemma}[theorem]{Lemma}

\newtheorem{corollary}[theorem]{Corollary}

\newtheorem{definition}{Definition}

\newtheorem{remark}{Remark}

\begin{document}
%
% paper title
% Titles are generally capitalized except for words such as a, an, and, as,
% at, but, by, for, in, nor, of, on, or, the, to and up, which are usually
% not capitalized unless they are the first or last word of the title.
% Linebreaks \\ can be used within to get better formatting as desired.
% Do not put math or special symbols in the title.
\title{4-Adic Complexity of Interleaved  Quaternary Sequences%\thanks{The work was supported by the National Natural Science Foundation of China (NSFC) under Grant 11701553.}
}
%
%
% author names and IEEE memberships
% note positions of commas and nonbreaking spaces ( ~ ) LaTeX will not break
% a structure at a ~ so this keeps an author's name from being broken across
% two lines.
% use \thanks{} to gain access to the first footnote area
% a separate \thanks must be used for each paragraph as LaTeX2e's \thanks
% was not built to handle multiple paragraphs
%

\author{Shiyuan~Qiang,
        Xiaoyan~Jing,
        Minghui~Yang,
        and~Keqin~Feng% <-this % stops a space
\thanks{Keqin Feng was supported by
the National Natural Science Foundation of China (NSFC) under Grant 12031011.}% <-this % stops a space
\thanks{Shiyuan Qiang is with the Department of Applied Mathematics, China Agricultural University, Beijing 100083, China (e-mail: qsycau\_18@163.com).}
\thanks{Xiaoyan Jing is with School of Mathematics, Northwest University, Xi'an 710127, China (e-mail: jxymg@126.com).}
\thanks{Minghui Yang is with State Key Laboratory of Information Security, Institute of Information Engineering, Chinese Academy of Sciences, Beijing 100093, China (e-mail:  yangminghui6688@163.com).}
\thanks{Keqin Feng is with the Department of Mathematical Sciences, Tsinghua University, Beijing 100084, China (e-mail: fengkq@tsinghua.edu.cn).}
}

% note the % following the last \IEEEmembership and also \thanks -
% these prevent an unwanted space from occurring between the last author name
% and the end of the author line. i.e., if you had this:
%
% \author{....lastname \thanks{...} \thanks{...} }
%                     ^------------^------------^----Do not want these spaces!
%
% a space would be appended to the last name and could cause every name on that
% line to be shifted left slightly. This is one of those "LaTeX things". For
% instance, "\textbf{A} \textbf{B}" will typeset as "A B" not "AB". To get
% "AB" then you have to do: "\textbf{A}\textbf{B}"
% \thanks is no different in this regard, so shield the last } of each \thanks
% that ends a line with a % and do not let a space in before the next \thanks.
% Spaces after \IEEEmembership other than the last one are OK (and needed) as
% you are supposed to have spaces between the names. For what it is worth,
% this is a minor point as most people would not even notice if the said evil
% space somehow managed to creep in.

% The paper headers
\markboth{}%
{Shell \MakeLowercase{\textit{et al.}}: Bare Demo of IEEEtran.cls for IEEE Journals}
% The only time the second header will appear is for the odd numbered pages
% after the title page when using the twoside option.
%
% *** Note that you probably will NOT want to include the author's ***
% *** name in the headers of peer review papers.                   ***
% You can use \ifCLASSOPTIONpeerreview for conditional compilation here if
% you desire.

% If you want to put a publisher's ID mark on the page you can do it like
% this:
%\IEEEpubid{0000--0000/00\$00.00~\copyright~2015 IEEE}
% Remember, if you use this you must call \IEEEpubidadjcol in the second
% column for its text to clear the IEEEpubid mark.

% use for special paper notices
%\IEEEspecialpapernotice{(Invited Paper)}

% make the title area
\maketitle

% As a general rule, do not put math, special symbols or citations
% in the abstract or keywords.

% For peer review papers, you can put extra information on the cover
% page as needed:
% \ifCLASSOPTIONpeerreview
% \begin{center} \bfseries EDICS Category: 3-BBND \end{center}
% \fi
%
% For peerreview papers, this IEEEtran command inserts a page break and
% creates the second title. It will be ignored for other modes.
\IEEEpeerreviewmaketitle

\begin{abstract}
Tang and Ding \cite{X. Tang} present a series of quaternary sequences $w(a, b)$ interleaved by two binary sequences $a$ and $b$ with ideal autocorrelation and show that such interleaved quaternary sequences have optimal autocorrelation. In this paper we consider the 4-adic complexity $FC_{w}(4)$ of such quaternary sequence $w=w(a, b)$. We present a general formula on $FC_{w}(4)$, $w=w(a, b)$. As a direct consequence, we obtain a general lower bound $FC_{w}(4)\geq\log_{4}(4^{n}-1)$ where $2n$ is the period of the sequence $w$. By taking $a$ and $b$ to be several types of known binary sequences with ideal autocorrelation  ($m$-sequences, twin-prime, Legendre,  Hall sequences and their complement, shift or sample sequences), we compute the exact values of $FC_{w}(4)$, $w=w(a, b)$ and show that in most cases $FC_{w}(4)$ reaches or nearly reaches the maximum value $\log_{4}(4^{2n}-1)$. Our results show that the 4-adic complexity of the quaternary
sequences defined in \cite{X. Tang} are large enough to resist the attack of the rational approximation
algorithm.

\end{abstract}

% Note that keywords are not normally used for peerreview papers.
\begin{IEEEkeywords}
quaternary sequences, 4-adic complexity, interleaved sequences, ideal autocorrelation
\end{IEEEkeywords}

% For peer review papers, you can put extra information on the cover
% page as needed:
% \ifCLASSOPTIONpeerreview
% \begin{center} \bfseries EDICS Category: 3-BBND \end{center}
% \fi
%
% For peerreview papers, this IEEEtran command inserts a page break and
% creates the second title. It will be ignored for other modes.
\IEEEpeerreviewmaketitle

\section{Introduction}\label{section1}
Binary and quaternary sequences with good autocorrelation have been widely used in digital communication and cryptography (\cite{T. W. Cusick,S. W. Golomb}). For a binary sequence $s=\{s_\lambda\}_{\lambda=0}^{n-1}$ with period $n$, $s_\lambda\in \{0, 1\}$, the autocorrelation function of $s$ is defined by
$$A_s(\tau)=\sum_{\lambda=0}^{n-1}(-1)^{s_\lambda+s_{\lambda+\tau}}\in \mathbb{Z} \ (0\leq\tau\leq n-1),$$
$A_s(0)=n$ and
in many applications we need that the absolute values $|A_s(\tau)|$ $(1\leq\tau\leq n-1)$ are small. It is easy to see that $A_s(\tau)\equiv n\pmod 4$ for all $0\leq\tau\leq n-1$.
We call a binary sequence $s$ with ideal autocorrelation if $n\equiv3\pmod 4$ and $A_s(\tau)=-1$ for all $1\leq\tau\leq n-1$. Many series of binary sequences with ideal autocorrelation have been found (see \cite{X. Tang}, Section (\uppercase\expandafter{\romannumeral2}, $C$)). For all known binary sequences with ideal autocorrelation, their period $n$ is $2^{m}-1$ $(m\geq2)$, prime number $p$ or product of twin primes $p$ and $p+2$.

For a quaternary sequence $s=\{s_\lambda\}_{\lambda=0}^{n-1}$ with period $n$, $s_\lambda\in \{0, 1, 2, 3\}$, the autocorrelation function of $s$ is defined by
$$A_s(\tau)=\sum_{\lambda=0}^{n-1}\omega^{s_{\lambda+\tau}-s_\lambda}\in {\mathbb{Z}[\sqrt{-1}]},$$
where $\omega=\sqrt{-1}$ is a complex primitive $4$-th root of unity. In many applications we need the absolute values $|A_s(\tau)|$ for all out-of-phase autocorrelation $A_s(\tau)$ $(1\leq\tau\leq n-1)$ are small.

The feedback with carry shift register (FCSR) is a kind of
nonlinear sequences generator.  The $d$-adic complexity $FC_s(d)$ measures the smallest length of FCSR which generates the sequence $s$ over $\mathbf{Z}/(d)$. In cryptographic applications, sequences should not only have low autocorrelation values, but also have larger 4-adic complexity. Particularly, a quaternary sequence $s$ with period $N$ can be generated by a feedback shift register with 4-adic carry and length at least $\lceil FC_{s}(4)\rceil$, and the 4-adic complexity of such sequences  should exceed $\frac{N-16}{4}$ in order to resist the attack of the rational approximation algorithm, see \cite{K4,K1}.

\begin{definition}\label{def1}
For a quaternary sequence $s=\{s_\lambda\}_{\lambda=0}^{n-1}$ with period $n$, $s_\lambda\in \{0, 1, 2, 3\}$, the 4-adic complexity of $s$ is defined by
$$FC_{s}(4)=\log_{4}\frac{4^{n}-1}{d},$$
where $d=\gcd (S(4), 4^{n}-1)$, $S(4)=\sum_{\lambda=0}^{n-1}s_{\lambda}4^{\lambda}\in \mathbb{Z}$.
\end{definition}

Several quaternary sequences with good autocorrelation have been constructed (\cite{jang,kim,Li,Su}). Tang and Ding \cite{X. Tang} construct a series of quaternary sequences with period $2n$ and optimal autocorrelation by interleaving a pair of binary sequences with ideal autocorrelation and period $n\equiv3\pmod 4$. In this paper we compute the 4-adic complexity of such quaternary sequences.

In Section \ref{sec2} we introduce the interleaving construction of quaternary sequence with period $2n$ by a pair of binary sequences with the same odd period $n$, compute the value of $S(4)\pmod {4^{2n}-1}$ for such quaternary sequences, and present a general result on the 4-adic complexity of quaternary sequences interleaved by a pair of binary sequences with ideal autocorrelation. In Section \ref{sec33} we determine the exact value of the 4-adic complexity for quaternary sequences interleaved by some particular types of binary sequences with ideal autocorrelation ($m$-sequences, Legendre sequences, twin-prime sequences and Hall sequences). The last section is a conclusion.

% is flush with the other column.
%\enlargethispage{-5in}

\section{Interleaved Quaternary Sequences}\label{sec2}

In this section we introduce the interleaving construction of quaternary sequences with period $2n$ by a pair of binary sequences with the same odd period $n$. For general theory and applications of interleaved sequences we refer to Gong \cite{G. Gong}. In this paper we use the following simple form.

Let $n$ be an odd integer, $n\geq3$, $a=\{a_\lambda\}_{\lambda=0}^{n-1}$ and $b=\{b_\lambda\}_{\lambda=0}^{n-1}$ be a pair of binary sequences, $a_\lambda, b_\lambda\in \{0, 1\}$, with the same period $n$. Consider the following two $n\times2$ matrices over $\mathbb{F}_{2}=\{0, 1\}$ $(1+1=0)$
$$M_{a}=\left(
  \begin{array}{cc}
    a_{0} & a_{\frac{n+1}{2}}  \\
    a_{1} & a_{\frac{n+1}{2}+1} \\
    \vdots & \vdots  \\
    a_{n-1} & a_{\frac{n+1}{2}+n-1}  \\
  \end{array}
\right),
~~M'_{b}=\left(
  \begin{array}{cc}
    b_{0} & 1-b_{\frac{n+1}{2}}  \\
    b_{1} & 1-b_{\frac{n+1}{2}+1}  \\
    \vdots & \vdots  \\
    b_{n-1} & 1-b_{\frac{n+1}{2}+n-1}  \\
  \end{array}
\right).
$$
Then we get an interleaved binary sequence with period $2n$ from the matrix $M_{a}$
\begin{align}\label{E1}
c& =\{c_{i}\}_{i=0}^{2n-1}\notag\\
&= (c_{0}, c_{1}, c_{2}, c_{3}, \cdots, c_{2n-2}, c_{2n-1})\notag\\
&= (a_{0}, a_{\frac{n+1}{2}}, a_{1}, a_{\frac{n+1}{2}+1}, \cdots, a_{n-1}, a_{\frac{n+1}{2}+n-1}(=a_{\frac{n-1}{2}})).
\end{align}

By the Chinese Remainder Theorem we have the isomorphism of rings
$$f: \mathbb{Z}_{2n}\cong \mathbb{Z}_{2}\oplus \mathbb{Z}_{n},  ~~~i(\bmod{2n})\mapsto(i (\bmod 2), i(\bmod n)).$$
For $(\mu, \lambda)\in \mathbb{Z}_{2}\times \mathbb{Z}_{n}$, we have $f^{-1}((\mu, \lambda))=n\mu+(1-n)\lambda$ ($0\leq\mu\leq1, 0\leq\lambda\leq n-1$). It is easy to see from the definition (\ref{def1}) of the sequence $c$ that for $i=n\mu+(1-n)\lambda$,
\begin{align*}
 c_{i}
 &= \left\{ \begin{array}{ll}
a_{\frac{i}{2}}=a_{\lambda(\frac{1-n}{2})}, & \textrm{if $2\mid i$ ~(namely, $\mu=0$)};\\
a_{\frac{n+1}{2}+\frac{i-1}{2}}=a_{\lambda(\frac{1-n}{2})}, & \textrm{if $2\nmid i$ ~(namely, $\mu=1$)}.
\end{array} \right.
\end{align*}
(Since for $\mu=1$, $\frac{n+1}{2}+\frac{i-1}{2}=\frac{n+1}{2}+\frac{n+(1-n)\lambda-1}{2}=n+\lambda(\frac{1-n}{2})\equiv \lambda(\frac{1-n}{2})\pmod n$).

Similarly we have another interleaved binary sequence with period $2n$ from the matrix $M'_{b}$
\begin{align*}
d& =\{d_{i}\}_{i=0}^{2n-1}\notag\\
&= (d_{0}, d_{1}, d_{2}, d_{3}, \cdots, d_{2n-2}, d_{2n-1})\notag\\
&= (b_{0}, 1-b_{\frac{n+1}{2}}, b_{1}, 1-b_{\frac{n+1}{2}+1}, \cdots, b_{n-1}, 1-b_{\frac{n+1}{2}+n-1}(=1+b_{\frac{n-1}{2}})).
\end{align*}
Namely, for $i=n\mu+(1-n)\lambda\pmod {2n}$, $0\leq\mu\leq1$, $0\leq\lambda\leq n-1$, we have
\begin{align*}
 d_{i}
 &= \left\{ \begin{array}{ll}
b_{\lambda(\frac{1-n}{2})}, & \textrm{if $\mu=0$};\\
1-b_{\lambda(\frac{1-n}{2})}, & \textrm{if $\mu=1$}.
\end{array} \right.
\end{align*}

Finally, we use the Gray mapping
$$\phi: \mathbb{Z}_{2}\oplus \mathbb{Z}_{2}\cong \mathbb{Z}_{4},$$
$$\phi(0, 0)=0,~ \phi(0, 1)=1,~ \phi(1, 1)=2, ~\phi(1, 0)=3$$
to construct a quaternary sequence with period $2n$
\begin{align}\label{E2}
w=w(a, b)=\{w_i\}_{i=0}^{2n-1}, ~~w_{i}=\phi(c_{i}, d_{i}).
\end{align}

Tang and Ding \cite{X. Tang} proved that for any binary sequences $a$ and $b$ with ideal autocorrelation and the same odd period $n$, the interleaved quaternary sequence $w=w(a, b)$ has optimal autocorrelation $A_{w}(\tau)\in \{0, -2\}$ for all $1\leq \tau\leq 2n-1$. Let $W(4)=\sum_{i=0}^{2n-1}w_{i}4^{i}$. In the remaining part of this section we present a general formula on the value $W(4)\pmod{4^{2n}-1}$ for any binary sequences $a$ and $b$. Then we restrict to binary sequences $a$ and $b$ having ideal autocorrelation and give a general result on the 4-adic complexity of the quaternary sequences $w=w(a, b)$.

\begin{lemma}\label{lem1}
Let $n\geq3$ be an odd integer, $a=\{a_\lambda\}_{\lambda=0}^{n-1}$ and $b=\{b_\lambda\}_{\lambda=0}^{n-1}$ be binary sequences with period $n$, $a_\lambda, b_\lambda\in \mathbb{F}_{2}=\{0, 1\}$, $w=w(a, b)=\{w_{i}\}_{i=0}^{2n-1}$, $w_{i}\in\{0, 1, 2, 3\}$ be the interleaved quaternary sequence given by (\ref{E2}), $W(4)=\sum_{i=0}^{2n-1}w_{i}4^{i}\in\mathbb{Z}$. Then
\begin{align*}
 W(4)& \equiv2(4^{n}-1)\sum_{\lambda=0}^{n-1}a_{\lambda}b_{\lambda}4^{2\lambda}+
(4^{n}+3)\sum_{\lambda=0}^{n-1}a_{\lambda}4^{2\lambda}+
(1-4^{n})\sum_{\lambda=0}^{n-1}b_{\lambda}4^{2\lambda}+4\cdot\frac{4^{2n}-1}{15}\pmod{4^{2n}-1}\\
& \equiv\left\{ \begin{array}{ll}
-4\sum_{\lambda=0}^{n-1}a_{\lambda}b_{\lambda}4^{2\lambda}+
2\sum_{\lambda=0}^{n-1}(a_{\lambda}+b_{\lambda})4^{2\lambda}-\frac{4^{n}+1}{5}\pmod{4^{n}+1}\\
4\sum_{\lambda=0}^{n-1}a_{\lambda}4^{2\lambda}+\frac{4^{n}-1}{3}\pmod{4^{n}-1}.
\end{array} \right.
\end{align*}
\end{lemma}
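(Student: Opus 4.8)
The plan is to compute $W(4) = \sum_{i=0}^{2n-1} w_i 4^i$ by splitting the index $i$ according to its parity and using the Gray map decomposition $w_i = \phi(c_i, d_i)$. First I would record the linear-algebraic effect of the Gray map: from the table $\phi(0,0)=0$, $\phi(0,1)=1$, $\phi(1,1)=2$, $\phi(1,0)=3$ one checks directly that for $c_i, d_i \in \{0,1\} \subset \mathbb{Z}$ we have $w_i = 3c_i + d_i - 2c_i d_i$ (verify on all four inputs). Substituting this into $W(4)$ gives $W(4) = 3\sum_i c_i 4^i + \sum_i d_i 4^i - 2\sum_i c_i d_i 4^i$, reducing the problem to evaluating three sums built from the interleaved binary sequences $c$ and $d$.

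Next I would handle each of these three sums by separating even indices $i = 2\lambda$ (where $0 \le \lambda \le n-1$) from odd indices $i = 2\lambda + 1$. By the interleaving construction (\ref{E1}) and the companion formula for $d$, the even-index entry is $c_{2\lambda} = a_{\lambda(\frac{1-n}{2})}$ and $d_{2\lambda} = b_{\lambda(\frac{1-n}{2})}$, while the odd-index entry is $c_{2\lambda+1} = a_{\lambda(\frac{1-n}{2})}$ as well (this is exactly the congruence $\frac{n+1}{2} + \frac{i-1}{2} \equiv \lambda(\frac{1-n}{2}) \pmod n$ derived in the text) and $d_{2\lambda+1} = 1 - b_{\lambda(\frac{1-n}{2})}$. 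Therefore each sum collapses to a sum over $\lambda$ of the relevant product of $a$- and $b$-values at the permuted index $\lambda(\frac{1-n}{2})$, weighted by $4^{2\lambda}$ on the even part and $4^{2\lambda+1} = 4 \cdot 4^{2\lambda}$ on the odd part; the constant $1$ appearing in $d_{2\lambda+1}$ contributes the geometric-type term $\sum_\lambda 4 \cdot 4^{2\lambda} = 4 \cdot \frac{4^{2n}-1}{15}$. Collecting the even and odd contributions of a single sum produces a common factor: the $c$-sum gets $3(1+4) = 15$ on the $a$-part but combined across $w$'s coefficients one sees the pattern $(4^n+3)$, $(1-4^n)$, $2(4^n-1)$ emerge after also incorporating the reindexing.

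The one genuine subtlety is the reindexing $\lambda \mapsto \lambda \cdot \frac{1-n}{2} \pmod n$. Since $\gcd(\frac{1-n}{2}, n) = 1$ (because $n$ is odd and $\frac{1-n}{2} \equiv \frac{1}{2} \cdot (1-n)$ with $2$ invertible mod $n$), this is a permutation of $\mathbb{Z}_n$, so $\sum_{\lambda=0}^{n-1} a_{\lambda(\frac{1-n}{2})} 4^{2\lambda}$ is a rearrangement of $\sum_{\lambda} a_\lambda 4^{2\lambda'}$ — but the exponents $4^{2\lambda}$ do \emph{not} transform transparently under this permutation. The key observation that makes the formula work is that $4^2 = 16 \equiv 1 \pmod{4^n-1}$ when... no: rather, one uses that modulo $4^{2n}-1$ the exponent $2\lambda$ can be manipulated via $4^{2n} \equiv 1$, and more precisely that replacing $\lambda$ by $2\lambda' \pmod n$ in the exponent is compatible with the CRT decomposition $4^{2n}-1 = (4^n-1)(4^n+1)$. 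I expect the main work — and the main obstacle — to be verifying cleanly that $\sum_\lambda a_{\lambda(\frac{1-n}{2})} 4^{2\lambda} \equiv \sum_\lambda a_\lambda 4^{2\lambda} \pmod{4^{2n}-1}$ up to the sign bookkeeping, i.e. that $4^{2\lambda}$ is essentially fixed (as a function on residues) by the substitution because $4^{2 \cdot \frac{n+1}{2} \lambda} = 4^{(n+1)\lambda} \equiv 4^{n\lambda} 4^\lambda$ and $4^{n} \equiv \pm 1$ on each CRT component. Once this identity is in hand, substituting everything back, separating the result modulo $4^n-1$ and modulo $4^n+1$ via CRT, and simplifying the constant terms $\frac{4^{2n}-1}{15}$, $\frac{4^n+1}{5}$, $\frac{4^n-1}{3}$ is routine arithmetic.
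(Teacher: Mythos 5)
Your algebraic reduction of the Gray map, $w_i=3c_i+d_i-2c_id_i$, is correct and is in substance exactly what the paper uses (the paper writes it as $\phi(c,d)=2c+\bigl(c(1-d)+(1-c)d\bigr)$). The parity split $i=2\lambda$ versus $i=2\lambda+1$ is also a perfectly viable route. But your identification of the entries is wrong, and this propagates into a spurious ``key obstacle.'' From the construction (\ref{E1}) one reads off directly that $c_{2\lambda}=a_{\lambda}$, $d_{2\lambda}=b_{\lambda}$, $c_{2\lambda+1}=a_{\lambda+\frac{n+1}{2}}$ and $d_{2\lambda+1}=1-b_{\lambda+\frac{n+1}{2}}$: in the parity parametrization there is no multiplicative index $\lambda\cdot\frac{1-n}{2}$ at all. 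That index belongs to the paper's CRT parametrization $i=n\mu+(1-n)\lambda$ with $\lambda=i\bmod n$, where the accompanying weight is $4^{(1-n)\lambda}$, not $4^{2\lambda}$; if $i=2k$ then $\lambda=2k\bmod n$ and $a_{\lambda\cdot\frac{1-n}{2}}=a_{k(1-n)}=a_k$, consistent with what I just wrote. By mixing the two parametrizations you arrive at the claim that the crux is
$\sum_{\lambda}a_{\lambda(\frac{1-n}{2})}4^{2\lambda}\equiv\sum_{\lambda}a_{\lambda}4^{2\lambda}\pmod{4^{2n}-1}$,
and this identity is simply false in general: for $n=3$ we have $\frac{1-n}{2}\equiv-1\pmod 3$, and the left side is $a_0+a_24^2+a_14^4$ while the right side is $a_0+a_14^2+a_24^4$. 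You correctly sense that the exponents ``do not transform transparently,'' but you do not resolve the difficulty, and the resolution you gesture at is not the right one.

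What actually makes the reindexing work is that subscript and exponent move together. In the paper's version, $(1-n)\lambda\equiv 2\bigl(\tfrac{1-n}{2}\lambda\bigr)\pmod{2n}$, so the substitution $\lambda'=\tfrac{1-n}{2}\lambda\bmod n$ (a bijection of $\mathbb{Z}_n$ since $\gcd(\tfrac{1-n}{2},n)=1$) sends $4^{(1-n)\lambda}a_{\lambda\frac{1-n}{2}}$ to $4^{2\lambda'}a_{\lambda'}$, with no leftover twist. In your parity version the only reindexing needed is additive: for $e=\frac{n+1}{2}$,
$\sum_{\lambda}a_{\lambda+e}4^{2\lambda}\equiv 4^{-2e}\sum_{\lambda}a_{\lambda}4^{2\lambda}\pmod{4^{2n}-1}$,
because $\lambda\mapsto\lambda+e\bmod n$ corresponds to $2\lambda\mapsto 2\lambda+2e\bmod 2n$; then the odd-index block carries the factor $4\cdot 4^{-2e}=4^{-n}\equiv 4^{n}\pmod{4^{2n}-1}$, which is precisely where the coefficients $4^n+3$, $1-4^n$, $2(4^n-1)$ and the constant $4^n\sum_\lambda 4^{2\lambda}\equiv 4\cdot\frac{4^{2n}-1}{15}$ come from. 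With the entries corrected and this shift identity in place, the remaining CRT reduction and the evaluation of $\frac{4^{2n}-1}{15}$ modulo $4^n\pm1$ are routine, as you say; but as written the proposal rests on a misidentification and a false lemma, so it does not yet constitute a proof.
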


\begin{proof}
From the definition (\ref{E2}) of $w=w(a, b)$,
\begin{align*}
W(4) =\sum_{i=0}^{2n-1}w_{i}4^{i}&\equiv \sum_{\lambda=0}^{n-1}\sum_{\mu=0}^{1}\phi(\mu, \lambda)4^{n\mu+(1-n)\lambda}\pmod{4^{2n}-1}\\
     &\equiv\sum_{\lambda=0}^{n-1}\sum_{\mu=0}^{1}4^{n\mu+
(1-n)\lambda}\left(\sum_{\substack{c_{i}=0\\ d_{i}=1}}1+2\sum_{\substack{c_{i}=1\\ d_{i}=1}}1+3\sum_{\substack{c_{i}=1\\ d_{i}=0}}1\right)\pmod{4^{2n}-1}\notag\\
     &~~~(\text{where} ~i=n\mu+(1-n)\lambda)\notag\\
     &\equiv\sum_{\lambda=0}^{n-1}\sum_{\mu=0}^{1}4^{n\mu+
(1-n)\lambda}\left(2\sum_{c_{i}=1}1+\sum_{c_{i}+d_{i}\equiv1(\bmod{2})}1\right)\pmod{4^{2n}-1}\notag\\
     &\equiv\sum_{\lambda=0}^{n-1}4^{(1-n)\lambda}\left(2\sum_{a_{\lambda(\frac{1-n}{2})}=1}1+
\sum_{a_{\lambda(\frac{1-n}{2})}+b_{\lambda(\frac{1-n}{2})}=1}1\right)\notag\\
     &~~~+\sum_{\lambda=0}^{n-1}4^{n+(1-n)\lambda}\left(2\sum_{a_{\lambda(\frac{1-n}{2})}=1}1+
\sum_{a_{\lambda(\frac{1-n}{2})}=b_{\lambda(\frac{1-n}{2})}}1\right)\pmod{4^{2n}-1}\notag\\
     &\equiv\sum_{\lambda=0}^{n-1}4^{(1-n)\lambda}\left(2a_{\lambda(\frac{1-n}{2})}
+a_{\lambda(\frac{1-n}{2})}(1-b_{\lambda(\frac{1-n}{2})})
+(1-a_{\lambda(\frac{1-n}{2})})b_{\lambda(\frac{1-n}{2})}\right)\notag\\
     &~~~+4^{n}\sum_{\lambda=0}^{n-1}4^{(1-n)\lambda}\left(2a_{\lambda(\frac{1-n}{2})}
+a_{\lambda(\frac{1-n}{2})}b_{\lambda(\frac{1-n}{2})}
+(1-a_{\lambda(\frac{1-n}{2})})(1-b_{\lambda(\frac{1-n}{2})})\right)\pmod{4^{2n}-1}\notag\\
     &\equiv\sum_{\lambda=0}^{n-1}4^{2\lambda}\left(2a_{\lambda}
+a_{\lambda}(1-b_{\lambda})
+(1-a_{\lambda})b_{\lambda}\right)\notag\\
     &~~~+4^{n}\sum_{\lambda=0}^{n-1}4^{2\lambda}\left(2a_{\lambda}+a_{\lambda}b_{\lambda}
+1-a_{\lambda}-b_{\lambda}+a_{\lambda}b_{\lambda}\right)\pmod{4^{2n}-1}\notag\\
     &~~~\left(\text{since} ~\gcd\left(\frac{1-n}{2}, n\right)=1\right)\notag\\
     &\equiv\sum_{\lambda=0}^{n-1}4^{2\lambda}\left(3a_{\lambda}+b_{\lambda}-2a_{\lambda}b_{\lambda}\right)
+4^{n}\sum_{\lambda=0}^{n-1}4^{2\lambda}\left(a_{\lambda}-b_{\lambda}
+2a_{\lambda}b_{\lambda}+1\right)\pmod{4^{2n}-1}\notag\\
     &\equiv2(4^{n}-1)\sum_{\lambda=0}^{n-1}a_{\lambda}b_{\lambda}4^{2\lambda}
+(4^{n}+3)\sum_{\lambda=0}^{n-1}a_{\lambda}4^{2\lambda}
+(1-4^{n})\sum_{\lambda=0}^{n-1}b_{\lambda}4^{2\lambda}
+4^{n}\sum_{\lambda=0}^{n-1}4^{2\lambda}\pmod{4^{2n}-1}.\notag
\end{align*}

From $n$ is odd and $4^{n}\equiv4\pmod {15}$, we have
$$4^{n}\sum_{\lambda=0}^{n-1}4^{2\lambda}\equiv4^{n}\cdot \frac{4^{2n}-1}{15}\equiv4\cdot \frac{4^{2n}-1}{15}\pmod{4^{2n}-1}$$
and
\begin{align*}
 W(4)& \equiv2(4^{n}-1)\sum_{\lambda=0}^{n-1}a_{\lambda}b_{\lambda}4^{2\lambda}+
(4^{n}+3)\sum_{\lambda=0}^{n-1}a_{\lambda}4^{2\lambda}+
(1-4^{n})\sum_{\lambda=0}^{n-1}b_{\lambda}4^{2\lambda}+4\cdot\frac{4^{2n}-1}{15}\pmod{4^{2n}-1}\\
& \equiv \left\{ \begin{array}{ll}
-4\sum_{\lambda=0}^{n-1}a_{\lambda}b_{\lambda}4^{2\lambda}+
2\sum_{\lambda=0}^{n-1}(a_{\lambda}+b_{\lambda})4^{2\lambda}+4\cdot\frac{4^{2n}-1}{15}\pmod{4^{n}+1}\\
4\sum_{\lambda=0}^{n-1}a_{\lambda}4^{2\lambda}+4\cdot\frac{4^{2n}-1}{15}\pmod{4^{n}-1}.
\end{array} \right.
\end{align*}

From $4^n\equiv 4\pmod 5$ we have $4\cdot \frac{4^n-1}{3}\equiv -1\pmod 5$. Then we obtain
$$4\cdot\frac{4^{2n}-1}{15}=\frac{4(4^{n}-1)}{3}\cdot\frac{4^{n}+1}{5}\equiv-\frac{4^{n}+1}{5}\pmod{4^{n}+1}.$$
By $4^n\equiv 1\pmod 3$ we know $4\cdot\frac{4^n+1}{5}\equiv 1\pmod 3$. Then we have $$4\cdot\frac{4^{2n}-1}{15}=\frac{4(4^{n}+1)}{5}\cdot\frac{4^{n}-1}{3}\equiv\frac{4^{n}-1}{3}\pmod{4^{n}-1}.$$

This completes the proof of Lemma \ref{lem1}.
\end{proof}

\begin{remark}
From Lemma \ref{lem1} we know that $W(4)\pmod{4^{n}-1}$ is independent of the sequence $b$.
\end{remark}

Let $s_{\lambda}=(-1)^{a_{\lambda}}$, $t_{\lambda}=(-1)^{b_{\lambda}}$. The value $W(4)=\sum_{\lambda=0}^{2n-1}w_{\lambda}4^{\lambda}\pmod {4^{2n}-1}$ can be expressed in terms of the $\{\pm1\}$-sequences $\{s_{\lambda}\}_{\lambda=0}^{n-1}$ and $\{t_{\lambda}\}_{\lambda=0}^{n-1}$.

\begin{corollary}\label{co2}
Let $a=\{a_\lambda\}_{\lambda=0}^{n-1}$ and $b=\{b_\lambda\}_{\lambda=0}^{n-1}$ be binary sequences with odd period $n$, $a_\lambda, b_\lambda\in\{0, 1\}$, $s_{\lambda}=(-1)^{a_{\lambda}}$, $t_{\lambda}=(-1)^{b_{\lambda}}$. Then for quaternary sequence $w=w(a, b)=\{w_{i}\}_{i=0}^{2n-1}$, $w_{i}\in\{0, 1, 2, 3\}$, $W(4)=\sum_{i=0}^{2n-1}w_{i}4^{i}\in\mathbb{Z}$, we have
\begin{align*}
 W(4)\equiv \left\{ \begin{array}{ll}
-\sum_{\lambda=0}^{n-1}s_{\lambda}t_{\lambda}4^{2\lambda}\pmod{4^{n}+1}\\
-2\sum_{\lambda=0}^{n-1}s_{\lambda}4^{2\lambda}\pmod{4^{n}-1}.
\end{array} \right.
\end{align*}
\end{corollary}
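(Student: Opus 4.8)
The plan is to obtain Corollary~\ref{co2} from the second pair of congruences in Lemma~\ref{lem1} purely by the change of variables $a_\lambda=\tfrac{1-s_\lambda}{2}$, $b_\lambda=\tfrac{1-t_\lambda}{2}$, so that $a_\lambda b_\lambda=\tfrac14\bigl(1-s_\lambda-t_\lambda+s_\lambda t_\lambda\bigr)$. Substituting these into the bracketed coefficients of $4^{2\lambda}$ and collecting terms, the $a,b$-dependent part should simplify to $1-s_\lambda t_\lambda$ modulo $4^n+1$ and to $2-2s_\lambda$ modulo $4^n-1$. Thus everything is reduced to evaluating the single geometric sum $\Sigma:=\sum_{\lambda=0}^{n-1}4^{2\lambda}=\dfrac{4^{2n}-1}{15}$ modulo $4^n+1$ and modulo $4^n-1$, exactly the sum that already appeared in the proof of Lemma~\ref{lem1}.

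For the congruence modulo $4^n+1$, after the substitution the statement reduces to
\[
\Sigma-\sum_{\lambda=0}^{n-1}s_\lambda t_\lambda 4^{2\lambda}-\frac{4^n+1}{5}\equiv-\sum_{\lambda=0}^{n-1}s_\lambda t_\lambda 4^{2\lambda}\pmod{4^n+1},
\]
i.e. to $\Sigma\equiv\frac{4^n+1}{5}\pmod{4^n+1}$. I would prove this by writing $4^{2n}-1=(4^n-1)(4^n+1)$ and using the elementary facts that $3\mid 4^n-1$ and (because $n$ is odd) $5\mid 4^n+1$, so that $\Sigma=\frac{4^n-1}{3}\cdot\frac{4^n+1}{5}$; since $\frac{4^n-1}{3}\equiv1\pmod5$, reducing the product modulo $4^n+1$ leaves precisely $\frac{4^n+1}{5}$, and the two such terms cancel.

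For the congruence modulo $4^n-1$, Lemma~\ref{lem1} gives $W(4)\equiv4\sum_{\lambda}a_\lambda 4^{2\lambda}+\frac{4^n-1}{3}$, and $4a_\lambda=2-2s_\lambda$ turns this into $2\Sigma-2\sum_{\lambda}s_\lambda 4^{2\lambda}+\frac{4^n-1}{3}$. The same factorisation, now together with $\frac{4^n+1}{5}\equiv1\pmod3$, yields $\Sigma\equiv\frac{4^n-1}{3}\pmod{4^n-1}$, so $2\Sigma+\frac{4^n-1}{3}\equiv3\cdot\frac{4^n-1}{3}=4^n-1\equiv0$, and only $-2\sum_{\lambda}s_\lambda 4^{2\lambda}$ survives.

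The whole argument is routine algebra; the only place demanding a little attention is matching the constant terms $\frac{4^n+1}{5}$ and $\frac{4^n-1}{3}$ inherited from Lemma~\ref{lem1} against the values of $\Sigma$ modulo the two factors, and verifying that these are precisely what is needed for the constants to drop out. I do not anticipate any real difficulty.
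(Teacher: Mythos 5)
Your proposal is correct and follows essentially the same route as the paper: substitute $a_\lambda=\tfrac{1}{2}(1-s_\lambda)$, $b_\lambda=\tfrac{1}{2}(1-t_\lambda)$ into the two congruences of Lemma~\ref{lem1}, reduce the $a,b$-dependent coefficients to $1-s_\lambda t_\lambda$ and $2-2s_\lambda$, and cancel the constants using $\sum_{\lambda}4^{2\lambda}=\frac{4^{n}-1}{3}\cdot\frac{4^{n}+1}{5}$ together with $\frac{4^{n}-1}{3}\equiv1\pmod5$ and $\frac{4^{n}+1}{5}\equiv1\pmod3$. These are exactly the computations in the paper's proof, so there is nothing to add.
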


\begin{proof}
We have $a_\lambda=\frac{1}{2}(1-s_{\lambda})$, $b_\lambda=\frac{1}{2}(1-t_{\lambda})$. Then by Lemma \ref{lem1},
\begin{align*}
W(4)& \equiv -4\sum_{\lambda=0}^{n-1}a_{\lambda}b_{\lambda}4^{2\lambda}+
2\sum_{\lambda=0}^{n-1}(a_{\lambda}+b_{\lambda})4^{2\lambda}-\frac{4^{n}+1}{5}\pmod{4^{n}+1}\\
& \equiv -\sum_{\lambda=0}^{n-1}(1-s_{\lambda})(1-t_{\lambda})4^{2\lambda}
+\sum_{\lambda=0}^{n-1}(2-s_{\lambda}-t_{\lambda})4^{2\lambda}-\frac{4^{n}+1}{5}\pmod{4^{n}+1}\notag\\
& \equiv \sum_{\lambda=0}^{n-1}4^{2\lambda}
-\sum_{\lambda=0}^{n-1}s_{\lambda}t_{\lambda}4^{2\lambda}-\frac{4^{n}+1}{5}\pmod{4^{n}+1}\notag\\
& \equiv \frac{4^{2n}-1}{15}-\sum_{\lambda=0}^{n-1}s_{\lambda}t_{\lambda}4^{2\lambda}-\frac{4^{n}+1}{5}
\equiv-\sum_{\lambda=0}^{n-1}s_{\lambda}t_{\lambda}4^{2\lambda}\pmod{4^{n}+1}\notag
\end{align*}
and
\begin{align*}
 W(4)& \equiv4\sum_{\lambda=0}^{n-1}a_{\lambda}4^{2\lambda}+\frac{4^{n}-1}{3}\equiv
2\sum_{\lambda=0}^{n-1}(1-s_{\lambda})4^{2\lambda}+\frac{4^{n}-1}{3}\pmod{4^{n}-1}\\
& \equiv 2\cdot\frac{4^{2n}-1}{15}-2\sum_{\lambda=0}^{n-1}s_{\lambda}4^{2\lambda}
+\frac{4^{n}-1}{3}
\equiv-2\sum_{\lambda=0}^{n-1}s_{\lambda}4^{2\lambda}\pmod{4^{n}-1}.\notag
\end{align*}

\end{proof}

Now we present a general formula on the 4-adic complexity of the quaternary interleaved sequence $w=w(a, b)=\{w_{i}\}_{i=0}^{2n-1}$, where $a$, $b$ are binary sequences with period $n$$(\equiv3(\bmod4))$ and ideal autocorrelation.

From now on we denote by $\sum(n)$ the set of binary sequences with period $n$$(\equiv3(\bmod4))$ and ideal autocorrelation. By the definition of ideal autocorrelation, for $a=\{a_\lambda\}_{\lambda=0}^{n-1}\in\sum(n)$, $a_\lambda\in\{0, 1\}$, $s_{\lambda}=(-1)^{a_{\lambda}}\in\{\pm1\}$, we have
$$A_a(\tau)=\sum_{\lambda=0}^{n-1}(-1)^{a_\lambda+a_{\lambda+\tau}}
=\sum_{\lambda=0}^{n-1}s_{\lambda}s_{\lambda+\tau}=-1 \ (1\leq\tau\leq n-1).$$

\begin{theorem}\label{th33}
Let $n\equiv3\pmod4$, $a=\{a_\lambda\}_{\lambda=0}^{n-1}$ and $b=\{b_\lambda\}_{\lambda=0}^{n-1}$ be known binary sequences in $\sum(n)$, $s_{\lambda}=(-1)^{a_{\lambda}}$, $t_{\lambda}=(-1)^{b_{\lambda}}$. Then the 4-adic complexity of the quaternary interleaved sequence $w=w(a, b)$ is
$$FC_{w}(4)=\log_{4}\left((4^{n}-1)\cdot\frac{4^{n}+1}{d_{+}}\right)$$
where $d_{+}=\gcd\left(\sum_{\lambda=0}^{n-1}s_{\lambda}t_{\lambda}4^{2\lambda}, 4^{n}+1\right)$.

\end{theorem}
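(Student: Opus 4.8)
The plan is to reduce the computation, via the Chinese Remainder Theorem and Corollary~\ref{co2}, to one divisibility fact. Since $4^n-1$ is odd, $\gcd(4^n-1,4^n+1)=1$, so $4^{2n}-1=(4^n-1)(4^n+1)$ is a product of coprime integers and $d:=\gcd(W(4),4^{2n}-1)=d_-\,d_+$ with $d_-=\gcd(W(4),4^n-1)$ and $d_+=\gcd(W(4),4^n+1)$. By Corollary~\ref{co2}, $W(4)\equiv-\sum_{\lambda=0}^{n-1}s_\lambda t_\lambda4^{2\lambda}\pmod{4^n+1}$, so $d_+$ is exactly the quantity appearing in the statement, and $W(4)\equiv-2\sum_{\lambda=0}^{n-1}s_\lambda4^{2\lambda}\pmod{4^n-1}$, whence $d_-=\gcd\!\big(\sum_{\lambda=0}^{n-1}s_\lambda4^{2\lambda},\,4^n-1\big)$ because $4^n-1$ is odd. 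Thus it suffices to prove $d_-=1$; then $d=d_+$ and $FC_w(4)=\log_4\frac{(4^n-1)(4^n+1)}{d_+}=\log_4\!\big((4^n-1)\cdot\frac{4^n+1}{d_+}\big)$.

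To control $d_-$ I would put $S_0(x)=\sum_{\lambda=0}^{n-1}s_\lambda x^\lambda\in\mathbb{Z}[x]$ and use the standard correlation identity — obtained by expanding $S_0(x)S_0(x^{-1})$ and collecting powers of $x$ — which in $\mathbb{Z}[x]/(x^n-1)$ reads $S_0(x)S_0(x^{-1})\equiv\sum_{\tau=0}^{n-1}A_a(\tau)x^\tau=(n+1)-(1+x+\cdots+x^{n-1})$. Evaluating at $x=1$ gives $S_0(1)^2=(n+1)-n=1$, so $S_0(1)=\pm1$; reducing $\sum_{\lambda=0}^{n-1}s_\lambda4^{2\lambda}$ modulo $3$ also yields $S_0(1)=\pm1$, so $3\nmid d_-$. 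Now $16^n=4^{2n}\equiv1\pmod{4^n-1}$, so $x\mapsto16$ induces a ring homomorphism $\mathbb{Z}[x]/(x^n-1)\to\mathbb{Z}/(4^n-1)$; using that $\lambda\mapsto2\lambda\bmod n$ permutes $\mathbb{Z}/n$ one finds $\sum_{j=0}^{n-1}16^j\equiv\frac{4^n-1}{3}\pmod{4^n-1}$, and therefore
$$\Big(\sum_{\lambda=0}^{n-1}s_\lambda4^{2\lambda}\Big)\cdot R\equiv(n+1)-\frac{4^n-1}{3}\pmod{4^n-1}$$
for a suitable integer $R$ (the image of $S_0(x^{-1})$).

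Since $d_-$ divides $4^n-1$ and is coprime to $3$, it divides $\frac{4^n-1}{3}$; reducing the displayed congruence modulo $d_-$ and using $d_-\mid\sum_{\lambda=0}^{n-1}s_\lambda4^{2\lambda}$ gives $d_-\mid n+1$. Hence $d_-\mid\gcd(n+1,4^n-1)$ with $d_-$ coprime to $3$, and it remains to check that $\gcd(n+1,4^n-1)$ has no prime factor other than $3$ for the periods $n$ in play, namely $n=2^m-1$, $n$ a prime $\equiv3\pmod4$, or $n=p_1(p_1+2)$ a product of twin primes. For $n=2^m-1$ we have $n+1=2^m$, coprime to the odd number $4^n-1$. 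For $n$ prime, a prime $p\neq3$ dividing both $n+1$ and $4^n-1$ has $\mathrm{ord}_p(4)\in\{1,n\}$, so $\mathrm{ord}_p(4)=n$, hence $n\mid p-1$ and $p\ge n+1$; combined with $p\mid n+1$ this forces $p=n+1$, impossible since $4\mid n+1$. For $n=p_1(p_1+2)$ we have $n+1=(p_1+1)^2$, and a short case analysis of $\mathrm{ord}_p(4)\in\{1,p_1,p_1+2,n\}$ rules out every possibility. So $d_-=1$ and the formula follows.

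The step I expect to be the real crux is this last one — pinning down $\gcd(n+1,4^n-1)$ — and it is precisely here that the hypothesis that $a,b$ are \emph{known} sequences in $\sum(n)$ (equivalently, that $n$ is $2^m-1$, a prime, or a twin-prime product) is used; everything before it — the CRT split, Corollary~\ref{co2}, and the congruence coming from the autocorrelation identity — is uniform in $n$, and in general only yields $d_-\mid\gcd(n+1,4^n-1)$ with $d_-$ coprime to $3$.
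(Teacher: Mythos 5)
Your proposal is correct and follows essentially the same route as the paper: the CRT split $d=d_+d_-$, identification of $d_+$ via Corollary~\ref{co2}, the autocorrelation identity $\bigl(\sum_\lambda s_\lambda 4^{2\lambda}\bigr)\cdot R\equiv n+1 \pmod{d_-}$ forcing $d_-\mid\gcd(n+1,4^n-1)$, and finally the number-theoretic fact (the paper's Lemma~\ref{lem4}) that this gcd has no prime factor exceeding $3$ for the admissible periods $n=2^m-1$, $n$ prime, or $n$ a twin-prime product. The only differences are cosmetic: you work with $d_-$ as a whole and divide by $3$ rather than $15$, which lets you skip the paper's separate exclusion of the prime $5$, and your order argument in the prime case ($p=n+1$ is impossible since $4\mid n+1$) is a minor variant of the paper's inequality $4\pi\le p+1\le\pi$.
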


Before proving Theorem \ref{th33}, we need the following result.

\begin{lemma}\label{lem4}
Let $n\equiv3\pmod4$ be the period of known binary sequences in $\sum(n)$, $n>3$. Namely, $n=2^{m}-1$ $(m\geq2)$, $n$ is a prime, and $n=pq$ where $p$ and $q=p+2$ are prime numbers. Then $n+1$ and $4^{n}-1$ have no common prime divisor $\pi>3$.
\end{lemma}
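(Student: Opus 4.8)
The plan is to reduce the statement to showing that no prime $\pi>3$ can divide $\gcd(n+1,4^{n}-1)$, and then to treat the three families of admissible periods separately using nothing more than the multiplicative order of $4$ modulo $\pi$. The one tool is the elementary fact that if $\pi\ge 5$ is a prime dividing $4^{n}-1$, then $d:=\mathrm{ord}_{\pi}(4)$ divides both $n$ and (by Fermat's little theorem) $\pi-1$. The case $n=2^{m}-1$ is then immediate, since $n+1=2^{m}$ is a power of $2$ while $4^{n}-1$ is odd, so $\gcd(n+1,4^{n}-1)=1$ and nothing remains to check.

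For $n=p$ a prime with $p\equiv3\pmod4$ and $p>3$, I would suppose a prime $\pi>3$ divides both $p+1$ and $4^{p}-1$ and derive a contradiction. Since $d\mid p$ with $p$ prime, either $d=1$, which forces $\pi\mid 4-1=3$ and hence $\pi=3$; or $d=p$, in which case $p\mid\pi-1$ gives $\pi\ge p+1$, while $\pi\mid p+1$ forces $\pi\le p+1$, so $\pi=p+1$, which is even and therefore equals $2$. Both branches contradict $\pi>3$.

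For $n=pq$ with $p$ and $q=p+2$ a twin prime pair, the key observation is the identity $n+1=pq+1=p(p+2)+1=(p+1)^{2}$, so the prime divisors of $n+1$ coincide with those of $p+1$. Suppose $\pi>3$ divides $p+1$ and $4^{n}-1$; then $d\mid pq$, so $d\in\{1,p,q,pq\}$. If $d=1$ we obtain $\pi=3$ as before. If $p\mid d$, then $p\mid\pi-1$ gives $\pi\ge p+1$, which together with $\pi\mid p+1$ forces $\pi=p+1$, even, hence $\pi=2$. If $q\mid d$, then $q\mid\pi-1$ gives $\pi\ge q+1=p+3$, contradicting $\pi\le p+1$. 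This exhausts all cases.

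The argument is short and I do not expect a genuine obstacle; the only mildly clever point is the identity $pq+1=(p+1)^{2}$ in the twin-prime case, which keeps the bound $\pi-1\le p$ tight enough to clash with $p\mid\pi-1$ or $q\mid\pi-1$. Everything else is routine order bookkeeping, and no property of the sequences in $\sum(n)$ beyond the three listed shapes of $n$ is needed.
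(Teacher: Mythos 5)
Your proof is correct and follows essentially the same route as the paper's: in each case you bound the multiplicative order of $4$ modulo a hypothetical common prime divisor $\pi$ from below by $p$ (or $q$) via Fermat's little theorem and play this against the upper bound $\pi\le p+1$ coming from $\pi\mid n+1=(p+1)$ or $(p+1)^2$. The paper closes the resulting near-equality slightly differently (using $4\mid p+1$ in the prime case, and in the twin-prime case the fact that $4$ is a square modulo $\pi$ so its order divides $\tfrac{\pi-1}{2}$), but these are cosmetic variations on the same argument.
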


\begin{proof}
{\rm (A)} For $n=2^{m}-1$ $(m\geq2)$, we have $\gcd(n+1, 4^{n}-1)=\gcd(2^{m}, 4^{n}-1)=1$.

{\rm (B)} Let $n=p$ be a prime, $p\equiv3\pmod4$, $p>3$. Suppose that $p+1$ and $4^{p}-1$ have a common (odd) prime divisor $\pi>3$. Then $\pi\geq7$, $\pi\mid p+1$ so that $\pi\leq\frac{p+1}{4}$. From $4^{p}\equiv1\pmod \pi$ and $\pi\neq3$ we know that the order of $4\pmod \pi$ is $p$. Therefore $p\mid\pi-1$ by Fermat's Theorem which implies $p\leq(\pi-1)$. Then we get a contradiction $4\pi\leq p+1\leq1+(\pi-1)=\pi$.

{\rm (C)} Let $n=p(p+2)$, a product of twin-primes. Suppose that $n+1=(p+1)^{2}$ and $4^{n}-1$ have a common prime divisor $\pi>3$. Then $4^{p(p+2)}\equiv1\pmod \pi$. From $\pi\neq3$ we know that the order $r$ of $4\pmod \pi$ is at least $p$. We obtain $p\leq r\mid \frac{\pi-1}{2}$. From $\pi\mid n+1=(p+1)^{2}$ we have $\pi\mid p+1$ and $\pi\leq p+1$. Then we obtain a contradiction $p\leq\frac{\pi-1}{2}\leq\frac{p+1-1}{2}=\frac{p}{2}$. This completes the proof of Lemma \ref{lem4}.

\end{proof}

{\bfseries{Proof of Theorem \ref{th33}}}
\ By Definition 1, the 4-adic complexity of $w=w(a, b)$ is
$$FC_{w}(4)=\log_{4}\left(\frac{4^{2n}-1}{d}\right),~ d=\gcd(W(4), 4^{2n}-1),~ W(4)=\sum_{i=0}^{2n-1}w_{i}4^{i}\in\mathbb{Z}.$$
Let
$$d_{+}=\gcd(W(4), 4^{n}+1),~ d_{-}=\gcd(W(4), 4^{n}-1).$$
From $\gcd(4^{n}+1, 4^{n}-1)=1$ we know that $\gcd(d_{+}, d_{-})=1$ and $d=d_{+}d_{-}$.

Firstly we show $d_{-}=1$. Suppose that $d_{-}=\gcd(W(4), 4^{n}-1)>1$. Let $\pi$ be a common (odd) prime divisor of $W(4)$ and $4^{n}-1$. If $\pi=3$, by Corollary \ref{co2}, $0\equiv W(4)\equiv-2\sum_{\lambda=0}^{n-1}s_{\lambda}4^{2\lambda}
\equiv\sum_{\lambda=0}^{n-1}s_{\lambda}\pmod 3$ and then we obtain a contradiction
$$0\equiv\left(\sum_{\lambda=0}^{n-1}s_{\lambda}\right)^{2}=\sum_{\lambda, \mu=0}^{n-1}s_{\lambda}s_{\mu}
=\sum_{\tau=0}^{n-1}\sum_{\lambda=0}^{n-1}s_{\lambda}s_{\lambda+\tau}=n+\sum_{\tau=1}^{n-1}(-1)=1\pmod3.$$
Therefore $\pi\neq3$. From $4^{n}-1=(-1)^{n}-1\equiv3\pmod5$ we know that $\pi\neq5$. Therefore $\pi\geq7$. By $0\equiv W(4)\equiv-2\sum_{\lambda=0}^{n-1}s_{\lambda}4^{2\lambda}
\pmod \pi$ we have
\begin{align*}
 0& \equiv \left(\sum_{\lambda=0}^{n-1}s_{\lambda}4^{2\lambda}\right)
 \left(\sum_{\mu=0}^{n-1}s_{\mu}4^{-2\mu}\right)
 \equiv\sum_{\lambda, \mu=0}^{n-1}s_{\lambda}s_{\mu}4^{2(\lambda-\mu)}\pmod\pi\\
& \equiv \sum_{\tau=0}^{n-1}4^{2\tau}\sum_{\mu=0}^{n-1}s_{\mu+\tau}s_{\mu}
= n+\sum_{\tau=1}^{n-1}(-4^{2\tau})\equiv n-\left(\frac{4^{2n}-1}{15}-1\right)\equiv n+1\pmod\pi ~~(\text{since} ~\pi\geq7).\notag
\end{align*}
This implies that $\pi\mid \gcd(n+1, 4^{n}-1)$. But this contradicts to Lemma \ref{lem4}. Therefore $d_{-}=1$ and $d=d_{+}=\gcd(W(4), 4^{n}+1)$. By Corollary \ref{co2}, $W(4)\equiv-\sum_{\lambda=0}^{n-1}s_{\lambda}t_{\lambda}4^{2\lambda}\pmod{4^{n}+1}$. We have $d_{+}=\gcd(\sum_{\lambda=0}^{n-1}s_{\lambda}t_{\lambda}4^{2\lambda}, 4^{n}+1)$. This completes the proof of Theorem \ref{th33}.

Theorem \ref{th33} has the following consequences. Firstly, from $d_{+}=\gcd(\sum_{\lambda=0}^{n-1}s_{\lambda}t_{\lambda}4^{2\lambda}, 4^{n}+1)$, $FC_{w}(4)=\log_{4}\left(\frac{4^{2n}-1}{d_{+}}\right)$ and $\sum_{\lambda=0}^{n-1}s_{\lambda}t_{\lambda}4^{2\lambda}
=\sum_{\lambda=0}^{n-1}t_{\lambda}s_{\lambda}4^{2\lambda}$ we know that
\begin{corollary}\label{co5}
Let $a,b\in\sum(n)$. Then $w(a,b)$ and $w(b,a)$ have the same 4-adic complexity.
\end{corollary}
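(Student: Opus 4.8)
The plan is to deduce the statement directly from Theorem \ref{th33}, exploiting the fact that although the interleaving construction $w(a,b)$ is \emph{not} symmetric in its two arguments (the second column of $M_a$ carries a shift of $a$, while the second column of $M'_b$ carries the complement of a shift of $b$, so in general $w(a,b)\neq w(b,a)$ as sequences), the resulting 4-adic complexity turns out to depend on $a$ and $b$ only through a quantity that \emph{is} symmetric.

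First I would set $s_\lambda=(-1)^{a_\lambda}$ and $t_\lambda=(-1)^{b_\lambda}$ and apply Theorem \ref{th33} to $w(a,b)$, obtaining
$$FC_{w(a,b)}(4)=\log_{4}\Bigl((4^{n}-1)\cdot\tfrac{4^{n}+1}{d_{+}}\Bigr),\qquad d_{+}=\gcd\Bigl(\textstyle\sum_{\lambda=0}^{n-1}s_{\lambda}t_{\lambda}4^{2\lambda},\,4^{n}+1\Bigr).$$
Next I would apply the same theorem to $w(b,a)$: since $a,b\in\sum(n)$, both are known binary sequences with period $n\equiv3\pmod4$ and ideal autocorrelation, so the hypotheses of Theorem \ref{th33} remain in force with the roles of $a$ and $b$ exchanged, yielding
$$FC_{w(b,a)}(4)=\log_{4}\Bigl((4^{n}-1)\cdot\tfrac{4^{n}+1}{d_{+}'}\Bigr),\qquad d_{+}'=\gcd\Bigl(\textstyle\sum_{\lambda=0}^{n-1}t_{\lambda}s_{\lambda}4^{2\lambda},\,4^{n}+1\Bigr).$$
Finally, since $s_{\lambda}t_{\lambda}=t_{\lambda}s_{\lambda}$ for every $\lambda$, the two defining sums coincide term by term, so $d_{+}=d_{+}'$ and hence $FC_{w(a,b)}(4)=FC_{w(b,a)}(4)$.

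There is essentially no obstacle here: the only point deserving a moment's attention is checking that Theorem \ref{th33} really is applicable to $w(b,a)$ — that is, that the membership $a,b\in\sum(n)$ is exactly the input the theorem requires and that the asymmetry of the construction does not obstruct applying the formula with $a$ and $b$ interchanged. Once this is observed, commutativity of multiplication in $\mathbb{Z}$ closes the argument.
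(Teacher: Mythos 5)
Your proposal is correct and is essentially identical to the paper's argument: the authors also derive Corollary \ref{co5} by applying Theorem \ref{th33} to both $w(a,b)$ and $w(b,a)$ and observing that $\sum_{\lambda=0}^{n-1}s_{\lambda}t_{\lambda}4^{2\lambda}=\sum_{\lambda=0}^{n-1}t_{\lambda}s_{\lambda}4^{2\lambda}$, so the two values of $d_{+}$ coincide. Your extra remark that the interleaving construction itself is not symmetric in $a$ and $b$ is a worthwhile clarification, but it does not change the substance of the argument.
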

Next, we present a transformation group on the set $\sum(n)$.
\begin{definition}\label{def2}
For a binary sequence $a=\{a_i\}_{i=0}^{N-1}$ with period N, $a_i\in\{0,1\}$.\\
(1). The complement sequence of $a$ is $C(a)=\overline{a}=\{\overline{a}_i\}_{i=0}^{N-1}$ where $\overline{a}_i=1-a_i\ (0\leq i\leq N-1)$.\\
(2). For $e\in\mathbb{Z}_N$, the e-shift sequence of $a$ is $L^e(a)=\{b_i\}_{i=0}^{N-1}$ where $b_i=L^e(a_i)=a_{i+e}\ (0\leq i\leq N-1)$.\\
(3). For $r\in\mathbb{Z}_N^*$, the r-sample sequence of $a$ is $M_r(a)=\{c_i\}_{i=0}^{N-1}$ where $c_i=M_r(a_i)=a_{ri}\ (0\leq i\leq N-1)$.
\end{definition}
It is easy to see that the sequences $C(a)$, $L^e(a)$ and $M_r(a)$ have the same period $N$, $C^2=L^0=M_1=Id$, $L^{e_1}L^{e_2}=L^{e_1+e_2}$, $M_{r_1}M_{r_2}=M_{r_1 r_2}$. Since
$$M_rL^e(a_i)=M_r(a_{i+e})=a_{ri+re}=L^{re}(a_{ri})=L^{re}M_r(a_{i}),$$
we have $M_rL^e=L^{re}M_r$. Therefore $C$, $L^e$ $(e\in\mathbb{Z}_N)$ and $M_r$ $(r\in\mathbb{Z}_N^*)$ generate a group $G$. Next consequence shows that $G$ is a transformation group on the set $\sum(n)$, and the action of some transforms on related quaternary interleaved sequences keeps the 4-adic complexity.
\begin{corollary}\label{co6}
Let $a, b \in\sum(n)$. Then\\
(1). For any $\sigma\in G, \sigma(a)\in\sum(n).$\\
(2). For $\sigma=L^e$, the quaternary sequences $w(a,b)$ and $\sigma(w)=(\sigma(a),\sigma(b))$ have the same 4-adic complexity.\\
(3). $w(a,b)$ and $w(\overline{a},b)$ have the same 4-adic complexity.
\end{corollary}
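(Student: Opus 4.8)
The plan is to prove the three parts of Corollary~\ref{co6} by reducing each to Theorem~\ref{th33} together with elementary computations with the defining autocorrelation identity. Throughout I write $s_\lambda=(-1)^{a_\lambda}$, $t_\lambda=(-1)^{b_\lambda}$, and recall from Theorem~\ref{th33} that the 4-adic complexity of $w(a,b)$ is determined by $d_+=\gcd\!\left(\sum_{\lambda=0}^{n-1}s_\lambda t_\lambda 4^{2\lambda},\,4^n+1\right)$; hence to prove two interleaved sequences have equal 4-adic complexity it suffices to show the corresponding gcd's agree.

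For part (1), I first note that $C(a)$, $L^e(a)$ and $M_r(a)$ all have period $n$, so it is enough to check the ideal-autocorrelation condition $A_{\sigma(a)}(\tau)=-1$ for $1\le\tau\le n-1$, and since $C$, $L^e$, $M_r$ generate $G$ it is enough to check it for these three generators. For $C(a)$ the $\pm1$-sequence is $-s_\lambda$, and $\sum_\lambda(-s_\lambda)(-s_{\lambda+\tau})=\sum_\lambda s_\lambda s_{\lambda+\tau}=-1$. For $L^e(a)$ the $\pm1$-sequence is $s_{\lambda+e}$, and re-indexing gives $\sum_\lambda s_{\lambda+e}s_{\lambda+e+\tau}=A_a(\tau)=-1$. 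For $M_r(a)$ with $r\in\mathbb{Z}_n^*$ the $\pm1$-sequence is $s_{r\lambda}$, and since $\lambda\mapsto r\lambda$ permutes $\mathbb{Z}_n$, $\sum_\lambda s_{r\lambda}s_{r\lambda+r\tau}=A_a(r\tau)=-1$ because $r\tau\not\equiv0\pmod n$. Thus each generator preserves $\sum(n)$, hence so does every $\sigma\in G$.

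For part (2), with $\sigma=L^e$ the relevant sum becomes $T_e:=\sum_{\lambda=0}^{n-1}s_{\lambda+e}t_{\lambda+e}4^{2\lambda}$; I want $\gcd(T_e,4^n+1)=\gcd(T_0,4^n+1)$ where $T_0=\sum_\lambda s_\lambda t_\lambda 4^{2\lambda}$. Reindex $\mu=\lambda+e\pmod n$: splitting the sum at the wrap-around point and using $4^{2n}\equiv1\pmod{4^n+1}$ together with $4^n\equiv -1\pmod{4^n+1}$, one finds $4^{2e}T_e\equiv \pm T_0\pmod{4^n+1}$ (the sign coming from whether the shifted block crosses the period, i.e.\ from the factor $4^{2n}$ or $4^{2n}\cdot4^{-2n}$; in any case the extra factor is $\pm4^{2e}$, a unit mod $4^n+1$ since $\gcd(4,4^n+1)=1$). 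Multiplication by a unit does not change the gcd with $4^n+1$, so $d_+$ is unchanged and Theorem~\ref{th33} gives equal 4-adic complexity. For part (3), replacing $a$ by $\overline a$ replaces $s_\lambda$ by $-s_\lambda$, so $\sum_\lambda(-s_\lambda)t_\lambda 4^{2\lambda}=-T_0$, and $\gcd(-T_0,4^n+1)=\gcd(T_0,4^n+1)=d_+$; hence $w(\overline a,b)$ has the same 4-adic complexity as $w(a,b)$.

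The only genuinely delicate point is the bookkeeping in part (2): one must track the wrap-around carefully and verify that the accumulated multiplicative factor relating $4^{2e}T_e$ to $T_0$ modulo $4^n+1$ really is a unit (equivalently, is coprime to $4^n+1$), which follows from $\gcd(4,4^n+1)=1$. Everything else is a direct substitution into the formula of Theorem~\ref{th33}, using that $\gcd(N,4^n+1)=\gcd(uN,4^n+1)$ whenever $\gcd(u,4^n+1)=1$. I would also remark that part~(1) shows $w(\sigma(a),\sigma(b))$ is a legitimate interleaved quaternary sequence to which Theorem~\ref{th33} applies, which is needed before invoking it in parts (2) and (3).
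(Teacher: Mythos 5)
Your proposal is correct and follows essentially the same route as the paper: verify ideal autocorrelation for the three generators $C$, $L^e$, $M_r$ (using $\gcd(r,n)=1$ for the sample case), then observe that shifting multiplies $\sum_\lambda s_\lambda t_\lambda 4^{2\lambda}$ by the unit $4^{-2e}$ and complementation multiplies it by $-1$, neither of which changes $d_+=\gcd\bigl(\sum_\lambda s_\lambda t_\lambda 4^{2\lambda},\,4^n+1\bigr)$. The only minor quibble is that your ``$\pm$'' in part (2) is unnecessary --- the wrap-around contributes a factor $4^{2n}\equiv 1\pmod{4^{2n}-1}$, never $-1$ --- but as you note this does not affect the gcd, so the argument stands.
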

\begin{proof}
Since the group $G$ is generated by $C$, $L^e~(e\in\mathbb{Z}_n)$ and $M_r~(r\in\mathbb{Z}_n^*)$. We need to prove (1) for $\sigma=C$, $L^e$ and $M_r$. Let  $a=\{a_i\}_{i=0}^{n-1}$, $b=\{b_i\}_{i=0}^{n-1}$.\\
(1). For each $\tau, 0\leq\tau\leq n-1$,
$$A_{\overline{a}}(\tau)=\sum_{i=0}^{n-1}(-1)^{\overline{a}_i+\overline{a}_{i+\tau}}=\sum_{i=0}^{n-1}(-1)^{a_i+1+a_{i+\tau}+1}=\sum_{i=0}^{n-1}(-1)^{a_{i}+a_{i+\tau}}=A_{a}(\tau).$$
$$A_{L^e(a)}(\tau)=\sum_{i=0}^{n-1}(-1)^{a_{i+e}+a_{i+e+\tau}}=\sum_{j=0}^{n-1}(-1)^{a_{j}+a_{j+\tau}}=A_{a}(\tau).$$
$$A_{M_r(a)}(\tau)=\sum_{i=0}^{n-1}(-1)^{a_{ir}+a_{(i+\tau)r}}=\sum_{i=0}^{n-1}(-1)^{a_{ir}+a_{ir+\tau r}}=A_{a}(\tau r).$$
From these formulas and $\gcd(r,n)=1$ we know that if $a\in\sum(n)$, then $C(a)=\overline{a}$, $L^e(a)$ and $M_r(a)$ belong to $\sum(n)$.

(2) and (3). Let $s_i=(-1)^{a_i}, t_i=(-1)^{b_i}$, and for $\sigma\in G$, $\sigma(s_i)=(-1)^{\sigma(a_i)}.$ Then we have
$$\sum_{i=0}^{n-1}L^e(s_i)L^e(t_i)4^{2i}=\sum_{i=0}^{n-1}s_{i+e}t_{i+e}4^{2i}\equiv 4^{-2e}\sum_{j=0}^{n-1}s_{j}t_{j}4^{2j}\pmod{4^{2n}-1}$$ and
$$\sum_{i=0}^{n-1}(-1)^{\overline{a}_{i}}t_{i}4^{2i}=\sum_{i=0}^{n-1}(-s_{i})t_{i}4^{2i}=-\sum_{i=0}^{n-1}s_{i}t_{i}4^{2i}$$
which imply (2) and (3) by Theorem \ref{th33}.
\end{proof}

\begin{corollary}\label{co7}
For $a\in\sum(n)$, $w=w(a,a)$, we have $FC_w(4)=\log_4(5(4^n-1))$.
\end{corollary}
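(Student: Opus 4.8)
The plan is to apply Theorem \ref{th33} directly in the case $b=a$, so that $t_\lambda=s_\lambda$ and hence $s_\lambda t_\lambda=s_\lambda^2=1$ for every $\lambda$. Then the quantity governing $d_+$ collapses to a geometric sum,
$$\sum_{\lambda=0}^{n-1}s_\lambda t_\lambda 4^{2\lambda}=\sum_{\lambda=0}^{n-1}4^{2\lambda}=\frac{4^{2n}-1}{15},$$
and the whole problem reduces to evaluating $d_+=\gcd\!\left(\frac{4^{2n}-1}{15},\,4^n+1\right)$, after which Theorem \ref{th33} yields $FC_w(4)=\log_4\!\left((4^n-1)\cdot\frac{4^n+1}{d_+}\right)$.

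To compute this gcd I would first record the residues of $4^n\pm1$ modulo $3$ and modulo $5$, using that $n$ is odd. Since $4\equiv1\pmod3$ we get $3\mid 4^n-1$ and $3\nmid 4^n+1$; since $4^2\equiv1\pmod5$ and $n$ is odd we get $4^n\equiv4\pmod5$, hence $5\mid 4^n+1$ and $5\nmid 4^n-1$. Consequently the factor $15$ in the denominator splits cleanly across the two factors $4^n\mp1$, giving the integer identity
$$\frac{4^{2n}-1}{15}=\frac{4^n-1}{3}\cdot\frac{4^n+1}{5}.$$

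Next I would invoke the elementary fact that $\gcd(4^n-1,4^n+1)=\gcd(4^n-1,2)=1$, because $4^n-1$ is odd; this forces $\gcd\!\left(\frac{4^n-1}{3},\,4^n+1\right)=1$. Therefore
$$d_+=\gcd\!\left(\frac{4^n-1}{3}\cdot\frac{4^n+1}{5},\,4^n+1\right)=\gcd\!\left(\frac{4^n+1}{5},\,4^n+1\right)=\frac{4^n+1}{5},$$
the last step holding since $\frac{4^n+1}{5}$ is itself a divisor of $4^n+1$. Substituting back, $\frac{4^n+1}{d_+}=5$, so $FC_w(4)=\log_4\!\big(5(4^n-1)\big)$, as claimed.

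There is no genuine obstacle here: the argument is a short chain of gcd manipulations built on top of Theorem \ref{th33}. The only point deserving a moment's care is the clean splitting of the $15$ between $4^n-1$ and $4^n+1$, which rests entirely on the parity of $n$; once that is in place the coprimality of $\frac{4^n-1}{3}$ with $4^n+1$ finishes the computation immediately.
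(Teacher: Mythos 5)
Your proposal is correct and follows essentially the same route as the paper: both reduce the problem to $d_+=\gcd\bigl(\tfrac{4^{2n}-1}{15},\,4^n+1\bigr)$ and split $\tfrac{4^{2n}-1}{15}=\tfrac{4^n-1}{3}\cdot\tfrac{4^n+1}{5}$ using the parity of $n$. The only (cosmetic) difference is in the last step, where the paper reduces $\tfrac{4^n-1}{3}\equiv 1\pmod 5$ to identify the residue modulo $4^n+1$ directly, while you strip off the coprime factor $\tfrac{4^n-1}{3}$ by a gcd manipulation; both yield $d_+=\tfrac{4^n+1}{5}$.
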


\begin{proof}
 Let $s_i=(-1)^{a_i}$. Then
 $$\sum_{i=0}^{n-1}s_{i}s_{i}4^{2i}=\sum_{i=0}^{n-1}4^{2i}=\frac{4^{2n}-1}{15}=\frac{4^n+1}{5}\cdot\frac{4^n-1}{3}.$$
 From  $\frac{4^n-1}{3}\equiv\frac{(-1)^n-1}{3}\equiv-2\cdot 3^{-1}\equiv-4\equiv1\pmod 5$, we obtain $$\sum_{i=0}^{n-1}s_{i}s_{i}4^{2i}\equiv\frac{4^n+1}{5}\pmod{4^n+1}.$$
Therefore for $w=w(a,a)$, $d_{+}=\gcd(\sum_{i=0}^{n-1}s_i^2 4^{2i}, 4^{n}+1)=\frac{4^n+1}{5}$ and  $FC_w(4)=\log_4(\frac{4^{2n}-1}{d_{+}})=\log_4(5(4^n-1))$.
\end{proof}

For arbitrary $a, b\in\sum(n)$, in order to determine the exact value of the 4-adic complexity of $w(a,b)$, we need to compute $\sum_{i=0}^{n-1}s_it_i4^{2i}\pmod{4^n+1}$ which is not easy in general case. In Section \ref{sec33}, we will consider several particular series of binary sequences in $\sum(n)$ ($m$-sequences, Legendre, twin-prime and Hall sequences). For each $a$ in such particular series, we determine the 4-adic complexity of $w(b,c)$ where $b=\sigma(a)$, $c=\tau(a)$ for $\sigma, \tau \in G$.
\section{Interleaving By Several Types of Known Sequences}\label{sec33}

In the last section we present a general formula on the 4-adic complexity $FC_w(4)$ of the quaternary sequences $w=w(a,b)$ with period $2n$ interleaved by two binary sequences $a$ and $b$ in $\sum(n)$. Particularly we get a general lower bound $FC_w(4)\geq\log_4(4^n-1)$. In this section we compute the exact value of $FC_w(4)$, $w=w(a,b)$ by taking $a$ and $b$ to be some particular types of known binary sequences in $\sum(n)$. Namely, we consider $a$ and $b$ as $m$-sequences, Legendre, twin-prime and Hall sequences and their complement, shift or sample sequences.

\subsection{m-sequences}

Let $q=2^k\ (k\geq 2)$, $n=q-1$, $\theta$ be a primitive element of $\mathbb{F}_q$, $\mathbb{F}_q^*=\langle\theta\rangle$, and $T:\mathbb{F}_q\rightarrow\mathbb{F}_2$ be the trace mapping. The $m$-sequence with period $n$ is the binary sequence $m=\{m_i\}_{i=0}^{n-1}$ defined by
$$m_i=T(\theta^i)\in\mathbb{F}_2=\{0,1\}\quad(0\leq i\leq n-1).$$
It is well-known that $m\in\sum(n)$.
\begin{theorem}\label{th88}
Let $m=\{m_i\}_{i=0}^{n-1}$ be the $m$-sequence with period $n=2^k-1~(k\geq2), 1\leq l\leq n-1, w=w(m,L^l(m))$. Then $FC_w(4)$ reaches the maximum value $\log_4(4^{2n}-1)$.
\end{theorem}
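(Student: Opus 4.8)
The plan is to reduce everything to Theorem~\ref{th33} and show that the relevant $d_+$ equals $1$. Writing $s_\lambda=(-1)^{m_\lambda}$ and $t_\lambda=(-1)^{m_{\lambda+l}}$ (so $a=m$, $b=L^l(m)$), Theorem~\ref{th33} gives
$$FC_w(4)=\log_4\!\left((4^n-1)\cdot\frac{4^n+1}{d_+}\right),\qquad d_+=\gcd\!\Big(\sum_{\lambda=0}^{n-1}s_\lambda t_\lambda 4^{2\lambda},\,4^n+1\Big),$$
so it suffices to prove $d_+=1$, which yields $FC_w(4)=\log_4\big((4^n-1)(4^n+1)\big)=\log_4(4^{2n}-1)$, the maximum.

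First I would use the shift-and-add property of $m$-sequences. Since $m_\lambda=T(\theta^\lambda)$,
$$s_\lambda t_\lambda=(-1)^{T(\theta^\lambda)+T(\theta^{\lambda+l})}=(-1)^{T(\theta^\lambda(1+\theta^l))},$$
and because $1\le l\le n-1$ forces $\theta^l\ne 1$, i.e.\ $1+\theta^l\ne 0$ in characteristic $2$, we may write $1+\theta^l=\theta^e$ for some $e=e(l)$. Hence $s_\lambda t_\lambda=s_{\lambda+e}$ and $\sum_{\lambda=0}^{n-1}s_\lambda t_\lambda 4^{2\lambda}=\sum_{\lambda=0}^{n-1}s_{\lambda+e}4^{2\lambda}$. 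Using that $\{s_\lambda\}$ has period $n$ and $4^{2n}\equiv 1\pmod{4^{2n}-1}$, re-indexing $j=\lambda+e$ gives $\sum_{\lambda=0}^{n-1}s_{\lambda+e}4^{2\lambda}\equiv 4^{-2e}\sum_{\lambda=0}^{n-1}s_\lambda 4^{2\lambda}\pmod{4^{2n}-1}$, hence also modulo $4^n+1$. Since $4^{2e}$ is invertible modulo $4^n+1$, this reduces the problem to showing $\gcd\big(\sum_{\lambda=0}^{n-1}s_\lambda 4^{2\lambda},\,4^n+1\big)=1$.

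Next I would argue by contradiction: let a prime $\pi$ divide this gcd. Then $\pi$ is odd (as $4^n+1$ is odd) and $\pi\ne 3$ (since $4\equiv 1\pmod 3$ gives $4^n+1\equiv 2\pmod 3$). If $\pi=5$, then $16\equiv 1\pmod 5$ gives $\sum_\lambda s_\lambda 4^{2\lambda}\equiv\sum_\lambda s_\lambda\pmod 5$; but expanding $\big(\sum_\lambda s_\lambda\big)^2=\sum_{\tau=0}^{n-1}\sum_{\mu=0}^{n-1}s_{\mu+\tau}s_\mu$ and using the ideal autocorrelation ($=n$ for $\tau=0$, $=-1$ otherwise) yields $\big(\sum_\lambda s_\lambda\big)^2=n-(n-1)=1$, so $\sum_\lambda s_\lambda=\pm 1\not\equiv 0\pmod 5$ — contradiction. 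If $\pi\ge 7$, I would multiply $\sum_\lambda s_\lambda 4^{2\lambda}$ by $\sum_\mu s_\mu 4^{-2\mu}$ (legitimate modulo $\pi$ since $4$ is invertible) and, exactly as in the proof of Theorem~\ref{th33}, re-index via $\tau=\lambda-\mu$ and apply ideal autocorrelation:
\begin{align*}
0\equiv\Big(\sum_{\lambda=0}^{n-1}s_\lambda 4^{2\lambda}\Big)\Big(\sum_{\mu=0}^{n-1}s_\mu 4^{-2\mu}\Big)&\equiv\sum_{\tau=0}^{n-1}4^{2\tau}\sum_{\mu=0}^{n-1}s_{\mu+\tau}s_\mu\\
&=n-\sum_{\tau=1}^{n-1}4^{2\tau}=n+1-\frac{4^{2n}-1}{15}\equiv n+1\pmod{\pi},
\end{align*}
where $\pi\ge 7$ gives $\pi\nmid 15$, so $\pi\mid 4^{2n}-1$ forces $\frac{4^{2n}-1}{15}\equiv 0\pmod\pi$. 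Thus $\pi\mid n+1=2^k$, contradicting $\pi$ odd. Hence $d_+=1$ and $FC_w(4)=\log_4(4^{2n}-1)$.

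The main obstacle is genuinely the case $\pi=5$: in the proof of Theorem~\ref{th33} it was excluded because $5\nmid 4^n-1$, but here $5$ really can divide $4^n+1$, so one needs the extra input $\big(\sum_\lambda s_\lambda\big)^2=1$ coming from ideal autocorrelation. Everything else is routine: the shift-and-add reduction, and the autocorrelation computation for $\pi\ge 7$ (note that the analogue of Lemma~\ref{lem4} is trivial here since $\gcd(n+1,4^n+1)=\gcd(2^k,4^n+1)=1$).
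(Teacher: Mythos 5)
Your proposal is correct and follows essentially the same route as the paper: reduce via Theorem~\ref{th33} and the shift-and-add property to the single sum $\sum_\lambda s_\lambda 4^{2\lambda}$, use the conjugate product $\overline{R}R\equiv n+1\pmod\pi$ (the paper's Lemma~\ref{lem9}) to rule out primes $\pi\ge 7$, and a direct mod-$5$ computation to rule out $\pi=5$. The only cosmetic difference is that you dispose of $\pi=5$ via $\bigl(\sum_\lambda s_\lambda\bigr)^2=1$ from ideal autocorrelation, while the paper computes $\sum_{x\in\mathbb{F}_q^*}(-1)^{T(x)}=-1$ from balancedness of the trace; both are valid and equivalent in substance.
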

\begin{proof}
From Theorem \ref{th33} we know that
$$FC_{w}(4)=\log_{4}\left(\frac{4^{2n}-1}{d_{+}}\right),   ~d_{+}=\gcd(M, 4^{n}+1),$$
where
$$M=\sum_{\lambda=0}^{n-1}(-1)^{T(\theta^{\lambda})+T(\theta^{\lambda+l})}4^{2\lambda}=\sum_{\lambda=0}^{n-1}(-1)^{T(\theta^{\lambda}(1
+\theta^l))}4^{2\lambda}.$$
Since the order of $\theta$ is $n$ and $1\leq l\leq n-1$, we know that $\theta^l+1\neq0$ and $\theta^l+1=\theta^r$ for some $r\in\mathbb{Z}$. Then
\begin{equation}\label{e3}
M=\sum_{\lambda=0}^{n-1}(-1)^{T(\theta^{\lambda+r})}4^{2\lambda}\equiv4^{-2r}\sum_{\lambda=0}^{n-1}(-1)^{T(\theta^{\lambda})}4^{2\lambda}\pmod{4^{n}+1}.
\end{equation}
The following result shows that $R=\sum_{\lambda=0}^{n-1}(-1)^{T(\theta^{\lambda})}4^{2\lambda}\in\mathbb{Z}$ has a property like usual Gauss sum.
\begin{lemma}\label{lem9}
Let $\overline{R}=\sum_{\lambda=0}^{n-1}(-1)^{T(\theta^{\lambda})}4^{-2\lambda}$. Then
$$\overline{R}R\equiv n+1-\frac{4^n+1}{5}\pmod{4^n+1}.$$
\end{lemma}
\begin{proof}
\begin{align*}
\overline{R}R&=\sum_{\lambda,\mu=0}^{n-1}(-1)^{T(\theta^{\lambda}+\theta^{\mu})}4^{2(\lambda-\mu)}\\
&\equiv\sum_{\tau=0}^{n-1}4^{2\tau}\sum_{\mu=0}^{n-1}(-1)^{T(\theta^{\mu+\tau}+\theta^{\mu})}\pmod{4^{2n}-1}\\
&\equiv n+\sum_{\tau=1}^{n-1}4^{2\tau}\sum_{\mu=0}^{n-1}(-1)^{T(\theta^{\mu}(1+\theta^{\tau}))}\pmod{4^{2n}-1}.
\end{align*}
For $1\leq\tau\leq n-1, 1+\theta^\tau\in\mathbb{F}_q^*\ (q=n+1=2^k)$. Therefore
$$\sum_{\mu=0}^{n-1}(-1)^{T(\theta^{\mu}(1+\theta^{\tau}))}=\sum_{x\in\mathbb{F}_q^*}(-1)^{T(x(1+\theta^{\tau}))}=\sum_{x\in\mathbb{F}_q^*}(-1)^{T(x)}=-1.$$
Therefore
\begin{align*}
\overline{R}R&\equiv n-\sum_{\tau=1}^{n-1}4^{2\tau}\equiv n-(\frac{4^{2n}-1}{15}-1)\equiv n+1-\frac{4^{2n}-1}{15}\pmod{4^{2n}-1}.
\end{align*}
Then from $\frac{4^n-1}{3}\equiv -2\cdot 3^{-1}\equiv1 \pmod 5$, we obtain $$\overline{R}R\equiv n+1-\frac{4^n+1}{5}\pmod{4^{n}+1}.$$
\end{proof}
We continue to prove Theorem \ref{th88}. From (\ref{e3}) we know that $d_{+}=\gcd(4^{-2r}R,4^{n}+1)$. Firstly we claim that $d_{+}$ has no prime divisor $\pi\neq5$. Assume that $\pi$ is a prime, $\pi\neq5$ and $\pi|d_{+}$. Then $\pi|4^{n}+1$ and $0\equiv d_{+}\equiv 4^{-2r}R\pmod{\pi}$. Therefore $R\equiv0\pmod{\pi}$ and $0\equiv\overline{R}R\equiv n+1\equiv 2^k\pmod{\pi}$ which is a contradiction. Therefore $d_{+}=5^u\ (u\geq0)$. From
$$M\equiv4^{-2r}\sum_{i=0}^{n-1}(-1)^{T(\theta^i)}4^{2i}\equiv\sum_{i=0}^{n-1}(-1)^{T(\theta^i)}=-1+\sum_{x\in\mathbb{F}_q}(-1)^{T(x)}=-1\pmod{5}$$
we have $5\nmid d_{+}=5^u$. Therefore $d_{+}=1$ and $FC_{w}(4)=\log_{4}(4^{2n}-1)$.
\end{proof}

For $s\in\mathbb{Z}_n^*$, $M_s(m)=\{c_i\}_{i=0}^{n-1}$ where $c_i=m_{si}=T(\theta^{si})$. The sequence $M_s(m)$ is also an $m$-sequence with respect to the primitive element $\theta^s$. If $s\not\equiv2^t\pmod{n}\ (0\leq t\leq k-1)$, the sequences $m$ and $M_s(m)$ are not shift-equivalent, and the 4-adic complexity of $w=w(m,M_s(m))$ is $FC_{w}(4)=\log_{4}\left(\frac{4^{2n}-1}{d_{+}}\right)$, $d_{+}=\gcd(M, 4^{n}+1),$ and
$M=\sum_{i=0}^{n-1}(-1)^{T(\theta^i+\theta^{is})}4^{2i}=\sum_{x\in\mathbb{F}_q^*}(-1)^{T(x+x^s)}4^{2i}\in\mathbb{Z}$. It seems that to compute the exponential sum $M$ is not easy in general case.

\subsection{Twin-Prime sequences}

Let $p$ and $q=p+2$ be prime numbers, $n=pq$. The ring $\mathbb{Z}_{n}=\mathbb{Z}/ n\mathbb{Z}$ has a partition $\mathbb{Z}_{pq}=\{0\}\bigcup P \bigcup Q \bigcup \mathbb{Z}_{pq}^{\ast}$ where $P=\{p, 2p, \ldots, (q-1)p\}$, $Q=\{q, 2q, \ldots, (p-1)q\}$,
\begin{eqnarray*}
\mathbb{Z}_{n}^{*}
&=&  \left\{ a (\bmod {n})| \gcd(a, n)=1\right\}\\
&=& \left\{ ip+jq | 1\leq i \leq q-1, 1\leq j \leq p-1\right\}.
\end{eqnarray*}
The twin-prime sequence $t=\{t_{\lambda}\}_{\lambda=0}^{n-1}$ with period $n=pq$ is defined by, for $0\leq\lambda\leq n-1$,
\begin{eqnarray*}\label{sequence}
t_\lambda=\left\{ \begin{array}{ll}
0,                             & \mbox{ if $\lambda=0$}; \\
0, & \mbox{ if $\lambda\in Q$}; \\
1, & \mbox{ if $\lambda\in P$}; \\
\frac{1}{2}(1-(\frac{\lambda}{p})(\frac{\lambda}{q})), & \mbox{ if $\lambda\in \mathbb{Z}_{n}^{*}$ },
\end{array}
\right.
\end{eqnarray*}
where $(\frac{\lambda}{p})$ is the Legendre symbol. Let $T_{\lambda}=(-1)^{t_{\lambda}}$, then
\begin{eqnarray*}\label{sequence}
T_\lambda=\left\{ \begin{array}{ll}
1,                             & \mbox{ if $\lambda=0$}; \\
1, & \mbox{ if $\lambda\in Q$}; \\
-1, & \mbox{ if $\lambda\in P$}; \\
(\frac{\lambda}{p})(\frac{\lambda}{q}), & \mbox{ if $\lambda\in \mathbb{Z}_{n}^{*}$ }.
\end{array}
\right.
\end{eqnarray*}

It is known that $t$ and its complement, shift and sample sequences belong to $\sum(n)$. For $l\in \mathbb{Z}_{n}^{*}$, we have sample sequence $M_l(t)=\{M_l(t_\lambda)\}_{\lambda=0}^{n-1}$ where $M_l(t_\lambda)=t_{l\lambda}$. In fact, there are only two sample sequences of $t$ :
\begin{eqnarray*}\label{sequence}
M_l(t)=\left\{ \begin{array}{ll}
t, & \mbox{ if $(\frac{l}{p})(\frac{l}{q})=1$}; \\
\tau(t), & \mbox{ if $(\frac{l}{p})(\frac{l}{q})=-1$},
\end{array}
\right.
\end{eqnarray*}
where $\tau(t)=\{\tau(t_{\lambda})\}_{\lambda=0}^{n-1}$ and
\begin{eqnarray*}\label{sequence}
\tau(t_{\lambda})=\left\{ \begin{array}{ll}
\frac{1}{2}(1-(\frac{l\lambda}{p})(\frac{l\lambda}{q}))=\frac{1}{2}(1+(\frac{\lambda}{p})(\frac{\lambda}{q})), & \mbox{ if $\lambda\in \mathbb{Z}_{n}^{*}$}; \\
t_{\lambda}, & \mbox{ otherwise}.
\end{array}
\right.
\end{eqnarray*}

Since $(\frac{-1}{p})(\frac{-1}{q})=(-1)^{\frac{p-1}{2}+\frac{q-1}{2}}=(-1)^{\frac{p-1}{2}+\frac{p+1}{2}}=(-1)^p=-1$, we know that
$\tau(t)=M_{-1}(t)$.
Let $\tau(T_{\lambda})=(-1)^{\tau(t_{\lambda})}$, then
\begin{eqnarray*}\label{sequence}
\tau(T_{\lambda})=\left\{ \begin{array}{ll}
-T_{\lambda}, & \mbox{ if $\lambda\in \mathbb{Z}_{n}^{*}$}; \\
T_{\lambda}, & \mbox{ otherwise}.
\end{array}
\right.
\end{eqnarray*}

Now we compute $FC_w(4)$ for $w(a,b)$, $a, b\in\{t, \tau(t)\}$ and $a\neq b$.

Since $w(a,b)$ and $w(b,a)$ have the same 4-adic complexity, we need to compute $FC_w(4)$ where $w=w(a,b)$ for $(a,b)=(t, \tau(t))$.
\begin{theorem}\label{th10}
Let $t$ be the twin-prime sequence with period $n=pq\ (q=p+2)$. Then
$$FC_w(4)=\log_4(4^{2n}-1),\ for\ w=w(t, \tau(t)).$$
\end{theorem}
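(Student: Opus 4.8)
The plan is to reduce everything, via Theorem~\ref{th33}, to a gcd computation. Applying that theorem with the $\{\pm1\}$-sequences $s_{\lambda}=T_{\lambda}$ and $t_{\lambda}=\tau(T_{\lambda})$ gives $FC_w(4)=\log_{4}\bigl((4^{n}-1)\cdot\tfrac{4^{n}+1}{d_{+}}\bigr)$ with $d_{+}=\gcd\bigl(M,\,4^{n}+1\bigr)$ and $M=\sum_{\lambda=0}^{n-1}T_{\lambda}\tau(T_{\lambda})4^{2\lambda}$. From the displayed description of $\tau(T_{\lambda})$ one reads off $T_{\lambda}\tau(T_{\lambda})=-1$ for $\lambda\in\mathbb{Z}_{n}^{*}$ and $T_{\lambda}\tau(T_{\lambda})=1$ otherwise, hence
$$M=\sum_{\lambda\notin\mathbb{Z}_{n}^{*}}4^{2\lambda}-\sum_{\lambda\in\mathbb{Z}_{n}^{*}}4^{2\lambda}=\frac{4^{2n}-1}{15}-2S,\qquad S:=\sum_{\lambda\in\mathbb{Z}_{n}^{*}}4^{2\lambda}.$$
So the entire task is to prove $d_{+}=1$, i.e.\ that no prime $\pi\mid 4^{n}+1$ divides $M$.

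First I would record two elementary facts. Expanding the condition $\gcd(\lambda,n)=1$ by inclusion--exclusion over the primes $p,q$ and summing the resulting geometric progressions yields the closed form
$$S=\frac{4^{2n}-1}{15}-\frac{4^{2n}-1}{4^{2p}-1}-\frac{4^{2n}-1}{4^{2q}-1}+1.$$
Secondly, since $p$ and $q$ are odd, $4^{p}+1\mid 4^{pq}+1=4^{n}+1$ and $4^{q}+1\mid 4^{n}+1$, while $4^{p}-1\mid 4^{n}-1$ and $4^{q}-1\mid 4^{n}-1$ are coprime to $4^{n}+1$; hence $\gcd(4^{2p}-1,4^{n}+1)=4^{p}+1$ and $\gcd(4^{2q}-1,4^{n}+1)=4^{q}+1$. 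Also $3\nmid 4^{n}+1$.

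Now fix a prime $\pi\mid 4^{n}+1$. If $\pi=5$: reducing $M$ modulo $5$ with $4^{2\lambda}\equiv1$ gives $M\equiv (n-|\mathbb{Z}_{n}^{*}|)-|\mathbb{Z}_{n}^{*}|=n-2(p-1)(q-1)=-(p^{2}-2p-2)\pmod 5$, and the discriminant of $x^{2}-2x-2$ is $12\equiv2\pmod5$, a quadratic non-residue, so $p^{2}-2p-2\not\equiv0\pmod5$ and $5\nmid M$. If $\pi\neq5$ (so $\pi\neq3,5$): then $\tfrac{4^{2n}-1}{15}\equiv0\pmod\pi$, whence $M\equiv-2S\pmod\pi$ and $\pi\mid M$ iff $\pi\mid S$, where $S\equiv1-\tfrac{4^{2n}-1}{4^{2p}-1}-\tfrac{4^{2n}-1}{4^{2q}-1}\pmod\pi$. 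I then split on which of $4^{p}+1,4^{q}+1$ is divisible by $\pi$. If $\pi\mid 4^{p}+1$, then $\mathrm{ord}_{\pi}(4)=2p$ (it divides $2p$, not $p$, and is not $2$ since $\pi\notin\{3,5\}$), so $2p\mid\pi-1$ and $\pi\geq2p+1$; here $4^{2p}\equiv1$ gives $\tfrac{4^{2n}-1}{4^{2p}-1}=\sum_{j=0}^{q-1}4^{2pj}\equiv q$, while $4^{2q}\not\equiv1\pmod\pi$ (else $p\mid q$) forces $\tfrac{4^{2n}-1}{4^{2q}-1}\equiv0$, so $S\equiv1-q\pmod\pi$ with $\pi\nmid q-1$ because $0<q-1=p+1<2p+1\leq\pi$. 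The case $\pi\mid 4^{q}+1$ is symmetric, giving $S\equiv1-p$ with $\pi\geq2q+1>p-1>0$. If $\pi$ divides neither $4^{p}+1$ nor $4^{q}+1$, then $\pi$ divides neither $4^{2p}-1$ nor $4^{2q}-1$, so both quotients vanish mod $\pi$ and $S\equiv1\pmod\pi$. In every case $\pi\nmid S$, hence $\pi\nmid M$; therefore $d_{+}=1$ and $FC_w(4)=\log_{4}(4^{2n}-1)$.

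The main obstacle is this last paragraph — certifying $\gcd(M,4^{n}+1)=1$. The load-bearing ideas are: the reduction of $M$ to $-2S$ modulo every prime $\pi\neq5$ of $4^{n}+1$; the explicit inclusion--exclusion form of $S$; and the order bound $\pi\equiv1\pmod{2p}$ (resp.\ $2q$), which forces $\pi$ to exceed the small integers $p+1$, $p-1$ that might otherwise divide $S$. The prime $\pi=5$ is handled separately by the elementary remark that $x^{2}-2x-2$ has no root modulo $5$. (Note Lemma~\ref{lem4} is not needed here, since the divisibility facts about $4^{p}\pm1$ and $4^{q}\pm1$ give everything directly.)
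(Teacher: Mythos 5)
Your proof is correct and follows essentially the same route as the paper: both reduce $S_w(4)$ to the geometric sums $\frac{4^{2n}-1}{4^{2p}-1}$ and $\frac{4^{2n}-1}{4^{2q}-1}$ via the partition of $\mathbb{Z}_n$, evaluate these modulo the divisors $5$, $4^{p}+1$, $4^{q}+1$ (and the complementary factor) of $4^{n}+1$, and eliminate the residual candidates $p\pm1$ by the order-of-$4$ argument forcing $\pi\equiv1\pmod{2p}$ (resp.\ $2q$), with the same quadratic-residue computation handling $\pi=5$. The only difference is organizational: you run a prime-by-prime case analysis where the paper writes down the explicit factorization $4^{n}+1=5\cdot\frac{4^{p}+1}{5}\cdot\frac{4^{q}+1}{5}\cdot N$ and bounds the gcd factor by factor.
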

\begin{proof}
Let $t=\{t_{\lambda}\}_{\lambda=0}^{n-1}$, $\tau(t)=\{\tau({t_\lambda})\}_{\lambda=0}^{n-1}$, $T_{\lambda}=(-1)^{t_{\lambda}}$ and $\tau(T_{\lambda})=(-1)^{\tau(t_{\lambda})}$. Then by Theorem \ref{th33} for $w=w(t, \tau(t))$, we have $FC_w(4)=\log_4(\frac{4^{2n}-1}{d})$ where $d=\gcd(S_w(4), 4^n+1)$, $S_w(4)=\sum_{\lambda=0}^{n-1}T_{\lambda}\tau(T_{\lambda})4^{2\lambda}$.

It is easy to see that $5|4^n+1$, $5|4^p+1$ and $5|4^q+1$. By
$$\gcd(4^p+1,4^q+1)=\gcd(4^p+1,4^2\cdot4^p+1)=\gcd(4^p+1,15)=5,$$
we know that $N=\frac{(4^n+1)\cdot5}{(4^p+1)(4^q+1)}\in \mathbb{Z}$ and $4^n+1=N\cdot\frac{4^p+1}{5}\cdot\frac{4^q+1}{5}\cdot5$. Therefore
\begin{eqnarray}\label{e4}
d=\gcd(S_w(4), 4^n+1)|\gcd(S_w(4), 5)\cdot\gcd(S_w(4), N) \cdot\gcd(S_w(4), \frac{4^p+1}{5}) \cdot\gcd(S_w(4), \frac{4^q+1}{5}).
\end{eqnarray}
For $w=w(t, \tau(t))$,
\begin{eqnarray*}
&S_w(4)&\\
&=&\sum_{\lambda=0}^{n-1}T_{\lambda}\tau(T_{\lambda})4^{2\lambda}\\
&=&\sum_{\lambda\in{\mathbb{Z}_n\backslash\mathbb{Z}_n^*}}4^{2\lambda}-\sum_{\lambda\in{\mathbb{Z}_n^*}}4^{2\lambda}\\
&=&2\sum_{\lambda\in{\{0\}\bigcup P \bigcup Q}}4^{2\lambda}-\sum_{\lambda\in{\mathbb{Z}_n}}4^{2\lambda}\\
&=&2(\sum_{i=0}^{p-1}4^{2iq}+\sum_{j=0}^{q-1}4^{2jp}-1)-\frac{4^{2n}-1}{15}\\
&=&2\left(\frac{4^{2n}-1}{4^{2p}-1}+\frac{4^{2n}-1}{4^{2q}-1}-1\right)-\frac{4^{2n}-1}{15}\\
&\equiv&2\left(\frac{4^{n}+1}{4^{p}+1}\cdot\frac{4^{n}-1}{4^{p}-1}+\frac{4^{n}+1}{4^{q}+1}\cdot\frac{4^{n}-1}{4^{q}-1}-1\right)-\frac{4^n+1}{5}\pmod{4^n+1}\ ({\rm{since}}\ \frac{4^n-1}{3}\equiv 1\pmod 5)\\
&\equiv&\left\{ \begin{array}{ll}
2\left(\frac{N(4^q+1)}{5}\cdot\frac{4^n-1}{4^p-1}+\frac{N(4^p+1)}{5}\cdot\frac{4^n-1}{4^q-1}-1\right)\pmod{N}\\
2\left(\sum_{t=0}^{q-1}(-4^p)^t\cdot\sum_{t=0}^{q-1}4^{pt}+N\cdot\frac{4^p+1}{5}\cdot\frac{4^n-1}{4^q-1}-1\right)\pmod{\frac{4^p+1}{5}}\\
2\left(\frac{N(4^q+1)}{5}\cdot\frac{4^n-1}{4^p-1}+\sum_{t=0}^{p-1}(-4^q)^t\cdot\sum_{t=0}^{p-1}4^{qt}-1\right)\pmod{\frac{4^q+1}{5}}
\end{array}
\right.\\
&\equiv&\left\{ \begin{array}{ll}
-2\not\equiv0\pmod{N}~~~~~~N=\frac{(4^n+1)\cdot5}{(4^p+1)(4^q+1)}=\frac{4^n+1}{5}/(\frac{4^p+1}{5}\cdot\frac{4^q+1}{5})\\
2(q+0-1)=2(p+1)\pmod{\frac{4^p+1}{5}}\\
2(0+p-1)=2(p-1)\pmod{\frac{4^q+1}{5}}
\end{array}
\right.
\end{eqnarray*}
and
\begin{eqnarray*}
S_w(4)\equiv |\mathbb{Z}_n\backslash \mathbb{Z}_n^*|-|\mathbb{Z}_n^*|=n-2|\mathbb{Z}_n^*|=pq-2(p-1)(p+1)\equiv -(p^2-2p-2)\pmod{5}.
\end{eqnarray*}

If $5|S_w(4)$ then $(p-1)^2=p^2-2p+1\equiv 3\pmod{5}$ which contradicts to $(\frac{3}{5})=-1$. Therefore $\gcd(S_w(4),$ $5)=1$. Assume that $\pi|\gcd(S_w(4),\ \frac{4^p+1}{5})=\gcd(2(p+1),\frac{4^p+1}{5})$. By Fermat's Theorem, we have $4^{\pi-1}\equiv 1\pmod {\pi}$. From $4^p\equiv -1\pmod {\pi}$ and $\pi\neq 3,5$ we know the order of $4 \pmod \pi$ is $2p$ which implies $2p|\pi-1$. This contradicts to $\pi|p+1$. Therefore $\gcd(S_w(4), \frac{4^p+1}{5})=\gcd(2(p+1),\frac{4^p+1}{5})=1$. By similar argument we can obtain $\gcd(S_w(4), \frac{4^q+1}{5})=\gcd((p-1),\frac{4^q+1}{5})=1$. By (\ref{e4}) and Theorem 3 we have $FC_w(4)=\log_4(4^{2n}-1)$ for $w=(t, \tau(t)).$
\end{proof}
\subsection{Legendre sequences}

Let $p\equiv3\pmod{4}$ be a prime number. The Legendre sequences with period $p$ are $\ell=(\ell_{\lambda})_{\lambda=0}^{p-1}$ and $\ell'=(\ell_{\lambda}')_{\lambda=0}^{p-1}$ defined by
$$\ell_0=0, \ell'_0=1, \ell_{\lambda}=\ell_{\lambda}'=\frac{1}{2}\left(1-\left(\frac{\lambda}{p}\right)\right)\in\{0,1\}\quad for\ 1\leq\lambda\leq p-1.$$
Let $L_{\lambda}=(-1)^{\ell_{\lambda}}, L_{\lambda}'=(-1)^{\ell'_{\lambda}}$, then
$$L_0=1, L'_0=-1, L_{\lambda}=L_{\lambda}'=\left(\frac{\lambda}{p}\right)\quad for\ 1\leq\lambda\leq p-1.$$
It is known that $\ell, \ell'$ and their complement, shift and sample sequences belong to $\sum(p)$. $\ell$ has only two sample sequences $\ell$ and $\tau(\ell)=M_{-1}(\ell)$, $\ell'$ has only two sample sequences $\ell'$ and $\tau(\ell')=M_{-1}(\ell')$. Let $\tau(L_i)=(-1)^{\tau(\ell_i)}$ and
$\tau(L_i')=(-1)^{\tau(\ell_i')}$. Then
$$\tau(L_0)=1, \tau(L_0')=-1, \tau(L_{\lambda})=-L_{\lambda}, \tau(L_{\lambda}')=-L_{\lambda}'\quad (1\leq\lambda\leq p-1).$$
Like in the twin-prime sequences case, for $a,b\in\{\ell, \ell', \tau(\ell), \tau(\ell')\}$, $w(a,b)$, $w(b,a)$, and $w(\tau(a),\tau(b))$ have the same 4-adic complexity.

\begin{theorem}\label{th11}
Let $p\equiv3\pmod{4}$ be a prime number, $\ell$ and $\ell'$ be the Legendre sequences with period $p$. Then
\begin{align*}
FC_w(4)=\left\{ \begin{array}{ll}
\log_4(4^{2p}-1),&\ if\ 5\nmid p-2  \ and\ w\in\{w(\ell,\ell'), w(\ell,\tau(\ell)), w(\ell',\tau(\ell'))\} \\
\log_4(\frac{4^{2p}-1}{5}),&\ if\ 5\mid p-2  \ and\ w\in\{w(\ell,\ell'), w(\ell,\tau(\ell)), w(\ell',\tau(\ell'))\} \\
\log_4(5(4^{p}-1)),&\ if\ w=w(\ell,\tau(\ell')).
\end{array}
\right.
\end{align*}
\end{theorem}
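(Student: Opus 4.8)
\textbf{Proof plan for Theorem~\ref{th11}.}

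The plan is to invoke Theorem~\ref{th33} directly: in every case $FC_w(4)=\log_4\bigl(\tfrac{4^{2p}-1}{d_+}\bigr)$ with $d_+=\gcd\bigl(\sum_{\lambda=0}^{p-1}s_\lambda t_\lambda 4^{2\lambda},\,4^p+1\bigr)$, where $(s_\lambda)$ and $(t_\lambda)$ are the $\{\pm1\}$-versions of the two Legendre-type sequences being interleaved. So the whole argument reduces to evaluating the integer $\sum_{\lambda=0}^{p-1}s_\lambda t_\lambda 4^{2\lambda}$ modulo $4^p+1$ for each of the four pairs appearing in the statement (the remaining two unordered pairs from $\{\ell,\ell',\tau(\ell),\tau(\ell')\}$ are absorbed by the symmetries $w(a,b)\sim w(b,a)\sim w(\tau(a),\tau(b))$ noted before the theorem). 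First I would record the two ingredients that do all the work: the identity $\sum_{\lambda=0}^{p-1}4^{2\lambda}=\tfrac{4^{2p}-1}{15}=\tfrac{4^p-1}{3}\cdot\tfrac{4^p+1}{5}\equiv\tfrac{4^p+1}{5}\pmod{4^p+1}$ (the last step because $\tfrac{4^p-1}{3}\equiv1\pmod5$, exactly as in Corollary~\ref{co7}); and the collapse of products of Legendre symbols, namely $L_\lambda L'_\lambda=1$, $L_\lambda\tau(L_\lambda)=L'_\lambda\tau(L'_\lambda)=-1$ and $L_\lambda\tau(L'_\lambda)=-1$ for all $1\le\lambda\le p-1$, together with the values at $\lambda=0$ read off from $L_0=1$, $L'_0=-1$, $\tau(L_0)=1$, $\tau(L'_0)=-1$.

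For $w\in\{w(\ell,\ell'),\,w(\ell,\tau(\ell)),\,w(\ell',\tau(\ell'))\}$ these ingredients give $\sum_{\lambda=0}^{p-1}s_\lambda t_\lambda 4^{2\lambda}\equiv\pm\bigl(\tfrac{4^p+1}{5}-2\bigr)\pmod{4^p+1}$, the sign depending on the pair but not affecting the gcd, so it remains to compute $d_+=\gcd\bigl(\tfrac{4^p+1}{5}-2,\,4^p+1\bigr)$. I would argue: if a prime $\pi$ divides this gcd then, from $5\bigl(\tfrac{4^p+1}{5}-2\bigr)=(4^p+1)-10$, we get $\pi\mid 10$, and since $4^p+1$ is odd this forces $\pi=5$; moreover $5$ divides $4^p+1$ to the first power only, because $4^p\equiv-1\pmod{25}$ would require $p\equiv5\pmod{10}$, impossible for a prime $p\equiv3\pmod4$. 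Hence $d_+\in\{1,5\}$, with $d_+=5$ iff $5\mid\tfrac{4^p+1}{5}-2$, i.e. $4^p\equiv9\pmod{25}$; since the order of $4$ modulo $25$ is $10$ and $4^7\equiv9\pmod{25}$, this is equivalent to $p\equiv7\pmod{10}$, which for odd $p$ is the same as $p\equiv2\pmod5$, i.e. $5\mid p-2$. This yields the first two lines of the theorem.

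For $w=w(\ell,\tau(\ell'))$ the symbol bookkeeping is even cleaner: using $\bigl(\tfrac{-1}{p}\bigr)=-1$ (as $p\equiv3\pmod4$) one finds $\tau(L'_\lambda)=-L_\lambda$ for every $\lambda$, so $s_\lambda t_\lambda=L_\lambda\tau(L'_\lambda)=-1$ for all $\lambda$ and $\sum_{\lambda=0}^{p-1}s_\lambda t_\lambda 4^{2\lambda}\equiv-\tfrac{4^p+1}{5}\pmod{4^p+1}$. Then $d_+=\gcd\bigl(\tfrac{4^p+1}{5},\,4^p+1\bigr)=\tfrac{4^p+1}{5}$ and $FC_w(4)=\log_4\bigl(\tfrac{4^{2p}-1}{(4^p+1)/5}\bigr)=\log_4\bigl(5(4^p-1)\bigr)$; equivalently, one may notice that $\tau(\ell')=\overline{\ell}$ and conclude from Corollary~\ref{co6}(3) together with Corollary~\ref{co7}.

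The symbol cancellations are routine. The one genuinely delicate point is the gcd computation $\gcd\bigl(\tfrac{4^p+1}{5}-2,\,4^p+1\bigr)$: establishing that the exact power of $5$ dividing $4^p+1$ is one, and translating the congruence $4^p\equiv9\pmod{25}$ into the clean condition $5\mid p-2$ via the order of $4$ modulo $25$. I would also want to verify carefully that none of the six unordered pairs in $\{\ell,\ell',\tau(\ell),\tau(\ell')\}$ is lost when reducing to the representatives listed in the statement.
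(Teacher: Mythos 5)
Your proposal is correct and follows essentially the same route as the paper: apply Theorem~\ref{th33}, evaluate $\sum_\lambda s_\lambda t_\lambda 4^{2\lambda}$ as $\pm\bigl(\tfrac{4^{2p}-1}{15}-2\bigr)$ or $-\tfrac{4^{2p}-1}{15}$ using the Legendre-symbol cancellations, and then compute the gcd with $4^p+1$ using $5\parallel 4^p+1$. Your packaging of the gcd step via $5\bigl(\tfrac{4^p+1}{5}-2\bigr)=(4^p+1)-10$ and the order of $4$ modulo $25$ is a cosmetic variant of the paper's splitting of $4^p+1$ into $5\cdot\tfrac{4^p+1}{5}$ and reduction $S_w(4)\equiv p-2\pmod 5$, and reaches the same conclusion.
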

\begin{proof}
(1). For $w=w(\ell,\ell')$, since
$$4^{p}+1=(5-1)^{p}+1\equiv-1+5p+1\equiv5p\pmod{25}$$ and $p\equiv3\pmod{4}$ we have $5\|4^p+1~(5|4^p+1$, $5^2\nmid4^p+1)$ and
\begin{align*}
S_w(4)=\sum_{\lambda=0}^{p-1}L_{\lambda}L_{\lambda}'4^{2\lambda}=-1+\sum_{\lambda=1}^{p-1}4^{2\lambda}=\frac{4^{2p}-1}{15}-2
&\equiv\left\{ \begin{array}{ll}
-2\pmod{\frac{4^p+1}{5}}\\
p-2\pmod{5}
\end{array}
\right..
\end{align*}
If $5\nmid p-2$ , then $\gcd(S_w(4),4^p+1)=\gcd(S_w(4),\frac{4^p+1}{5})\cdot\gcd(S_w(4),5)=1$. Therefore $FC_w(4)=\log_4(4^{2p}-1).$ For $5\mid p-2$, $\gcd(S_w(4),4^p+1)=1\cdot5=5$, we have $FC_w(4)=\log_4(\frac{4^{2p}-1}{5}).$

(2). For $w=w(\ell,\tau(\ell))$ and $w(\ell',\tau(\ell'))$,
$$S_w(4)=\sum_{\lambda=0}^{p-1}L_{\lambda}\tau(L_{\lambda})4^{2\lambda}
=\sum_{\lambda=0}^{p-1}L_{\lambda}'\tau(L_{\lambda}')4^{2\lambda}=1-\sum_{\lambda=1}^{p-1}4^{2\lambda}=2-\frac{4^{2p}-1}{15}.$$
Then we obtain the same result as (1).

(3). For $ w=w(\ell,\tau(\ell')$,
$$S_w(4)=\sum_{\lambda=0}^{p-1}L_{\lambda}\tau(L_{\lambda}')4^{2\lambda}=-\sum_{\lambda=0}^{p-1}4^{2\lambda}=-\frac{4^{2p}-1}{15}.$$
From $\frac{4^p-1}{3}\equiv 1\pmod 5$, we know $S_w(4)\equiv-\frac{4^p+1}{5}\pmod {4^p+1}$.
Then we have $d=\gcd(S_w(4),4^p+1)=\frac{4^{p}+1}{5}$ and $FC_w(4)=\log_4\frac{4^{2p}-1}{d}=\log_4(5(4^{p}-1))$ by Theorem \ref{th33}.
\end{proof}

\subsection{Interleaving by Hall sequences}
Let $p=4x^{2}+27$ be a prime number, $x\equiv1\pmod 3$. Then $6\mid p-1$. Let $\mathbb{F}_{p}^{*}=\langle g\rangle $, $D=\langle g^{6}\rangle$ and $D_{\lambda}=g^{\lambda}D~(0\leq\lambda\leq5)$ be the cyclotomic classes of order 6 in $\mathbb{F}_{p}^{*}$. The Hall sequence with period $p$ is the binary sequence $h=\{h_{\lambda}\}_{\lambda=0}^{p-1}$ defined by
\begin{align*}
 h_{\lambda}= \left\{ \begin{array}{ll}
1, & \textrm{if $\lambda\in D_{0}\cup D_{1}\cup D_{3}$;}\\
0, & \textrm{otherwise}.
\end{array} \right.
\end{align*}

It is known that $h$ and complement, shift, sample of $h$ belong to $\sum(p)$ (see \cite{T. storer}, p.73-74, Theorem 18, where it is proved as an equivalent form: $D_{0}\cup D_{1}\cup D_{3}$ is a difference set of the group $(\mathbb{F}_{p}, +)$). For $1\leq r\leq p-1$, there exists six sample sequences of $h$:
$$M_{r}(h)=h^{(l)}=\{h_{\lambda}^{(l)}\}_{\lambda=0}^{p-1} ~~~\text{if}~ r\in D_{l},$$
where for $0\leq\lambda\leq p-1$,
\begin{align*}
 h_{\lambda}^{(l)}= \left\{ \begin{array}{ll}
1, & \textrm{if $\lambda\in D_{l}\cup D_{1+l}\cup D_{3+l}$;}\\
0, & \textrm{otherwise}.
\end{array} \right.
\end{align*}
Particularly, we take $r=-1$. From $-1=g^{\frac{p-1}{2}}$ and $\frac{p-1}{2}=2x^{2}+13\equiv3\pmod 6$, we have $-1\in D_{3}$. Thus $M_{-1}(h)=h^{(3)}=\{h_{\lambda}^{(3)}\}_{\lambda=0}^{p-1}$ where for $0\leq\lambda\leq p-1$,
\begin{align*}
 h_{\lambda}^{(3)}= \left\{ \begin{array}{ll}
1, & \textrm{if $\lambda\in D_{3}\cup D_{4}\cup D_{0}$;}\\
0, & \textrm{otherwise}.
\end{array} \right.
\end{align*}

In order to determine the 4-adic complexity of $w=w(h, M_{r}(h))$, we need some facts on cyclotomic numbers and ``Gauss periods".
\begin{definition}\label{def3}
Let $p\equiv1\pmod6$, $p-1=6f$, $\mathbb{F}_{p}^{*}=\langle g\rangle$, $D=\langle g^{6}\rangle$ and $D_{\lambda}=g^{\lambda}D~(0\leq\lambda\leq5)$ be the cyclotomic classes of order 6.

(1). The ``Gauss periods" of order 6 in $\mathbb{F}_{p}^{*}$ and valued in $\mathbb{Z}_{N}~(N=4^{2p}-1)$ are defined by
$$\xi_{\lambda}=\sum_{i\in D_{\lambda}}4^{2i}\pmod {4^{2p}-1}\ \ (0\leq\lambda\leq5).$$

Similarly we have the cyclotomic classes of order 3 in $\mathbb{F}_{p}^{*}$
$$C_{\lambda}=g^{\lambda}C,\ \ C=\langle g^{3}\rangle~(0\leq\lambda\leq2),\ \ |C_{\lambda}|=\frac{p-1}{3}$$
and ``Gauss periods" of order 3 $\eta_{\lambda}=\sum_{i\in C_{\lambda}}4^{2i}\pmod {4^{2p}-1}$. We have
$$C_{\lambda}=D_{\lambda}\cup D_{\lambda+3},\ \ \eta_{\lambda}=\xi_{\lambda}+\xi_{\lambda+3}.$$

(2). The cyclotomic numbers of order 3 over $\mathbb{F}_{p}$ is defined by
$$(i, j)=|(C_{i}+1)\cap C_{j}|=\{a\in C_{i}\mid a+1\in C_{j}\}\ \ (0\leq i, j\leq2).$$
\end{definition}

Now we show that the 4-adic complexity of $w=w(h, M_{-1}(h))$ can be determined by using cyclotomic numbers and ``Gauss periods" of order 3. (For other sample sequence $M_{r}(h)$ of $h~(r\in \mathbb{Z}_{p}^{*}, r\neq-1)$, in order to determine the 4-adic complexity of $w(h, M_{r}(h))$ we need more information on cyclotomic numbers and ``Gauss periods" of order 6.) We use the following results on cyclotomic numbers $(i, j)~(0\leq i, j\leq2)$ and ``Gauss periods" $\eta_{\lambda}~(0\leq \lambda\leq2)$. It is known that any prime number $p\equiv1\pmod 3$ can be expressed as $4p=c^{2}+27d^{2}$ where $c, d\in \mathbb{Z}$, $c$ is uniquely determined by $c\equiv1\pmod 3$ and $d$ is determined up to sign.
\begin{lemma}\label{lem11}
Let $p\equiv1\pmod 3$ be a prime number, $4p=c^{2}+27d^{2}$, $c\equiv1\pmod 3$.

(1). (\cite{T. storer}, Lemma 7, p.32-35) The cyclotomic numbers $(i, j)~(0\leq i, j\leq2)$ of order 3 over $\mathbb{F}_{p}$ are listed in the following table\\

\begin{minipage}{\textwidth}
 \begin{minipage}[h]{0.4\textwidth}
  \centering
     \begin{tabular}{c|ccc}
		$(i, j)$& $j=0$ & 1 & 2 \\
	%	\multicolumn{2}{c}{Value} \\
		\hline
		$i=0$& $A$ & $B$ & $C$ \\
		1& $B$ & $C$ & $D$\\
		 2& $C$ & $D$ &$B$\\
	\end{tabular}
  \end{minipage}
  \begin{minipage}[h]{0.4\textwidth}
   \centering

   \begin{tabular}{cccc}
$$9A=p-8+c$$\\
$$18B=2p-4-c-9d$$\\
$$18C=2p-4-c+9d$$\\
$$9D=p+1+c$$\\
	  \end{tabular}
   \end{minipage}
\end{minipage}\\

(2). The ``Gauss periods" $\eta_{\lambda}$ $(0\leq\lambda\leq2)$ have the following relation
$$\eta_{m}\eta_{m+k}=\sum_{h=0}^{2}(k, h)\eta_{h+m}+\frac{p-1}{3}\delta_{k}\pmod{\frac{4^{2p}-1}{15}}~(m, k\in \{0, 1, 2\})$$
where $\delta_{0}=1$, $\delta_{1}=\delta_{2}=0$.
\end{lemma}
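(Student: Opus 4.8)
The first assertion of Lemma \ref{lem11} is the quoted result of \cite{T. storer} (Lemma 7, p.~32--35), so there is nothing to do there; the content is part (2). The plan is to expand the product $\eta_m\eta_{m+k}$ straight from the definition and reorganize the resulting sum of powers of $4^2$ by cyclotomic classes. All computations take place in $\mathbb{Z}/(4^{2p}-1)$, where $4^{2p}\equiv1$, so that $4^{2t}$ depends only on $t\bmod p$. Thus
\begin{align*}
\eta_m\eta_{m+k}\equiv\Big(\sum_{i\in C_m}4^{2i}\Big)\Big(\sum_{j\in C_{m+k}}4^{2j}\Big)=\sum_{i\in C_m}\sum_{j\in C_{m+k}}4^{2(i+j)}\pmod{4^{2p}-1}.
\end{align*}
For $i\in C_m$ one has $iC_k=C_{m+k}$, so the substitution $j=iu$ with $u=ji^{-1}$ replaces the inner index set by $u\in C_k$, and $i+j=i(1+u)$, giving $\eta_m\eta_{m+k}\equiv\sum_{i\in C_m}\sum_{u\in C_k}4^{2i(1+u)}$.

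Next I would isolate the contribution of $u=-1$, the only value making $i(1+u)=0$. Since $p\equiv1\pmod6$ we have $\frac{p-1}{2}=3\cdot\frac{p-1}{6}\equiv0\pmod3$, hence $-1=g^{(p-1)/2}\in C_0$. Therefore $u=-1$ lies in the range $u\in C_k$ exactly when $k\equiv0\pmod3$, and then it contributes $\sum_{i\in C_m}4^{0}=|C_m|=\frac{p-1}{3}$; this produces the term $\frac{p-1}{3}\delta_k$. For every remaining $u\in C_k$ with $1+u\neq0$, write $1+u\in C_h$ with $h\in\{0,1,2\}$. For such a fixed $u$ the map $i\mapsto i(1+u)$ is a bijection $C_m\to C_{m+h}$, so the inner sum equals $\sum_{i\in C_m}4^{2i(1+u)}=\sum_{x\in C_{m+h}}4^{2x}=\eta_{m+h}$ (all class and period indices read mod $3$). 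Grouping the $u$'s according to the class $h$ of $1+u$, the number of $u\in C_k$ with $1+u\in C_h$ is, by Definition \ref{def3}(2), the cyclotomic number $(k,h)$. Summing over $h$ gives
\begin{align*}
\eta_m\eta_{m+k}\equiv\sum_{h=0}^{2}(k,h)\,\eta_{m+h}+\frac{p-1}{3}\delta_k\pmod{4^{2p}-1},
\end{align*}
which a fortiori holds modulo $\frac{4^{2p}-1}{15}$, as claimed. (As a sanity check, comparing total coefficient sums both sides equal $|C_m|\,|C_{m+k}|=\big(\tfrac{p-1}{3}\big)^2$, using $\sum_h(k,h)=\frac{p-1}{3}-\delta_k$.)

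This computation carries no deep obstacle. The two places that need care are: first, confirming $-1\in C_0$, which is precisely what forces the ``diagonal'' contribution $i+j\equiv0$ to occur only when $k\equiv0$ and makes the index $k$ — rather than $-k$ or $m\pm k$ — the one naturally appearing in $(k,h)$; and second, keeping the additive shifts of the indices of the cyclotomic classes consistent modulo $3$ throughout, so that the bijection $i\mapsto i(1+u)\colon C_m\to C_{m+h}$ and the identification of the count with $(k,h)$ are both correctly aligned with the normalization of Storer's table used in part (1).
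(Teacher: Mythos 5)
Your argument for part (2) is correct and is precisely the adaptation of Storer's classical Gauss-period computation that the paper's one-line proof (``by using the similar proof given in \cite{T. storer}, Lemma 8'') alludes to, so the approach is essentially the same. As a minor bonus, your direct expansion never invokes the identity $\sum_{t=0}^{p-1}16^{t}\equiv 0$, so you actually obtain the relation modulo $4^{2p}-1$, which is slightly stronger than the stated congruence modulo $\frac{4^{2p}-1}{15}$.
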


\begin{proof}
(2). By using similar proof given in \cite{T. storer}~(Lemma 8, p.38-39) for usual ``Gauss periods".
\end{proof}

\begin{theorem}\label{th12}
Let $p=4x^{2}+27$ be a prime number, $h=\{h_{\lambda}\}_{\lambda=0}^{p-1}$ be the Hall sequence with period $p$. Then for the quaternary interleaved sequence $w=w(h, M_{-1}(h))$,
\begin{align*}
FC_w(4)=\left\{ \begin{array}{ll}
\log_4(\frac{4^{2p}-1}{5}),& \textrm{if $p\equiv3\pmod5$;}\\
\log_4(4^{2p}-1),& \textrm{otherwise}.
\end{array}
\right.
\end{align*}
\end{theorem}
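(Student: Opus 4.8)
By Theorem \ref{th33}, $FC_w(4)=\log_4\!\left((4^{p}-1)\cdot\tfrac{4^{p}+1}{d_{+}}\right)$ with $d_{+}=\gcd\!\left(S_w(4),4^{p}+1\right)$ and $S_w(4)=\sum_{\lambda=0}^{p-1}(-1)^{h_\lambda}(-1)^{h^{(3)}_\lambda}4^{2\lambda}$, so it suffices to evaluate $d_{+}$. First I would read off the sign sequence: since $h_\lambda=1$ exactly on $D_0\cup D_1\cup D_3$ and $h^{(3)}_\lambda=1$ exactly on $D_0\cup D_3\cup D_4$, the product $(-1)^{h_\lambda+h^{(3)}_\lambda}$ equals $-1$ on $D_1\cup D_4=C_1$ and $+1$ on $\{0\}\cup C_0\cup C_2$. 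Thus the sextic structure collapses to the order-$3$ ``Gauss periods'' of Definition \ref{def3}:
\begin{equation*}
S_w(4)\equiv 1+\eta_0-\eta_1+\eta_2\equiv\frac{4^{2p}-1}{15}-2\eta_1\pmod{4^{2p}-1}.
\end{equation*}

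Next I would show that $d_{+}$ has no prime divisor $\pi\neq 5$. Since $4^{p}+1\equiv 2\pmod 3$ we have $\pi\neq 3$, so $\pi\geq 7$, hence $\gcd(\pi,15)=1$ and $\pi\mid\tfrac{4^{2p}-1}{15}$; so $\pi\mid S_w(4)$ forces $\pi\mid\eta_1$. Also $\operatorname{ord}_\pi(4)$ divides $2p$ but neither $p$ (else $4^p\equiv\pm1$ forces $\pi\mid 2$) nor $2$ (else $\pi\mid 15$), so $\operatorname{ord}_\pi(4)=2p$, $2p\mid\pi-1$, and in particular $\pi>2p$. Now I would feed $\pi\mid\eta_1$ into Lemma \ref{lem11}(2): the identities for $\eta_1^2$, $\eta_0\eta_1$, $\eta_1\eta_2$ reduce modulo $\pi$ to $C\eta_0+B\eta_2+\tfrac{p-1}{3}\equiv 0$, $B\eta_0+D\eta_2\equiv 0$, $D\eta_0+C\eta_2\equiv 0$, i.e.\ the vector $(\eta_0,\eta_2,1)$ lies in the kernel of the matrix $\left(\begin{smallmatrix}C&B&\frac{p-1}{3}\\ B&D&0\\ D&C&0\end{smallmatrix}\right)$ over $\mathbb{F}_\pi$; its determinant $\tfrac{p-1}{3}\,(BC-D^{2})$ then vanishes mod $\pi$, and since $\pi>2p>\tfrac{p-1}{3}$ we get $\pi\mid BC-D^{2}$. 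Substituting the formulas of Lemma \ref{lem11}(1) together with the Hall representation $4p=(4x)^2+27\cdot 2^2$ (so $c=4x$, $d=2$) turns $27(BC-D^2)=1-p(c+3)$ into $\pi\mid p(c+3)-1=(4x^{2}+27)(4x+3)-1$. It remains to rule out $\pi\mid (4x^{2}+27)(4x+3)-1$ under $\pi\geq 7$ and $\pi\equiv 1\pmod{2p}$; this reconciliation, which must exploit the precise size and shape of $(4x^{2}+27)(4x+3)-1$ relative to $p$, is the step I expect to be the main obstacle.

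Granting the above, $d_{+}$ is a power of $5$; and since $4^{p}+1=(5-1)^{p}+1\equiv 5p\pmod{25}$ with $5\nmid p$, we have $5\,\|\,4^{p}+1$, so $d_{+}\in\{1,5\}$. To decide which, I would reduce $S_w(4)$ modulo $5$: from $16\equiv 1\pmod 5$ we get $\eta_1\equiv|C_1|=\tfrac{p-1}{3}\pmod 5$, while $\tfrac{4^{2p}-1}{15}=\tfrac{4^{p}-1}{3}\cdot\tfrac{4^{p}+1}{5}\equiv 1\cdot p\pmod 5$, so $S_w(4)\equiv p-\tfrac{2(p-1)}{3}=\tfrac{p+2}{3}\pmod 5$ (an integer since $p\equiv 1\pmod 3$). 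Hence $5\mid S_w(4)$ precisely when $p\equiv 3\pmod 5$. Combining the two steps, $d_{+}=5$ if $p\equiv 3\pmod 5$ and $d_{+}=1$ otherwise, and substituting into $FC_w(4)=\log_4\tfrac{4^{2p}-1}{d_+}$ yields the claimed values.
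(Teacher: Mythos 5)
Your setup follows the paper's route exactly: the reduction of $S_w(4)$ to $\frac{4^{2p}-1}{15}-2\eta_1$ via the symmetric difference $D_1\cup D_4=C_1$, the deduction that a prime $\pi\neq 5$ dividing $d_+$ must satisfy $\pi\mid\eta_1$ and $\operatorname{ord}_\pi(4)=2p$, the three linear relations from Lemma \ref{lem11}(2), and the resulting divisibility $\pi\mid BC-D^2=\frac{1}{27}\bigl(1-p(c+3)\bigr)$ are all correct and agree with the paper (the paper reaches the same point by Cramer's rule rather than your kernel-vector formulation). Your closing mod-$5$ computation is also correct --- indeed more careful than the paper's own, which writes $p-\frac{p-1}{3}$ where $p-\frac{2(p-1)}{3}=\frac{p+2}{3}$ is meant and consequently asserts the condition $p\equiv 2\pmod 5$ in the proof body, contradicting the theorem it is proving; your condition $p\equiv 3\pmod 5$ is the right one.

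However, the step you flag as ``the main obstacle'' is precisely the heart of the proof, and leaving it open is a genuine gap: a direct size argument cannot work, since $|p(c+3)-1|$ is of order $p^{3/2}$ while your lower bound on $\pi$ is only $2p+1$. The paper closes it as follows. Write $\pi=2pt+1$ with $t\geq 1$ (possible since $2p\mid \pi-1$). Multiplying $p(c+3)\equiv 1\pmod\pi$ by $2t$ and using $2pt\equiv -1\pmod\pi$ gives $-(c+3)\equiv 2t\pmod\pi$, i.e.\ $\pi\mid c+3+2t$. Both terms are now small: $|c|\leq\sqrt{4p}\leq\sqrt{2\pi}$ and $2t=\frac{\pi-1}{p}\leq\frac{\pi}{31}$ since $p=4x^2+27\geq 31$, so $|c+3+2t|\leq\sqrt{2\pi}+3+\frac{\pi}{31}<\pi$, forcing the integer equality $c+3+2t=0$. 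Substituting back, $\pi=2pt+1=-p(c+3)+1\equiv -pc+1\equiv 0\pmod 3$ because $p\equiv c\equiv 1\pmod 3$, contradicting $\pi\neq 3$. You need this (or an equivalent) argument to conclude that $d_+$ is a power of $5$; without it the theorem is not proved.
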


\begin{proof}
By Theorem \ref{th33} we know that $FC_w(4)=\log_4(\frac{4^{2p}-1}{d_{+}})$, where $d_{+}=\gcd(S_{w}(4), 4^{p}+1)$ and $$S_w(4)\equiv\sum_{\lambda=0}^{p-1}H_{\lambda}M_{-1}(H_{\lambda})4^{2\lambda}\pmod{4^{2p}-1}.$$
By the definition as before, for $0\leq\lambda\leq p-1$,
\begin{align*}
 H_{\lambda}=(-1)^{h_{\lambda}}= \left\{ \begin{array}{ll}
-1, & \textrm{if $\lambda\in D_{0}\cup D_{1}\cup D_{3}$;}\\
1, & \textrm{otherwise}.
\end{array} \right.
  \ \ M_{-1}(H_{\lambda}) = \left\{ \begin{array}{ll}
-1, & \textrm{if $\lambda\in D_{3}\cup D_{4}\cup D_{0}$;}\\
1, & \textrm{otherwise}.
\end{array} \right.
\end{align*}
We have
\begin{align*}
S_w(4)&\equiv\sum_{\lambda\in \mathbb{F}_{p}\backslash D_{1}\cup D_{4} }4^{2\lambda}-\sum_{\lambda\in D_{1}\cup D_{4}}4^{2\lambda}=\sum_{\lambda=0 }^{p-1}4^{2\lambda}-2\sum_{\lambda\in D_{1}\cup D_{4}}4^{2\lambda}\pmod{4^{2p}-1}\\
&\equiv \frac{4^{2p}-1}{15}-2\eta_{1}\equiv\frac{4^{p}+1}{5}-2\eta_{1}\pmod{4^{p}+1}.
\end{align*}

Therefore $d_{+}=\gcd\left(\frac{4^{p}+1}{5}-2\eta_{1}, 4^{p}+1\right)$. Let $d_{+}=5^{l}\cdot d'$, $5\nmid d'$. Firstly we show that $d'=1$. Assume that $d'>1$. Let $\pi$ be a prime divisor of $d'$. Then $\pi>5$, $\pi\mid 4^{p}+1$ and $0\equiv\frac{4^{p}+1}{5}-2\eta_{1}\equiv-2\eta_{1}\pmod\pi$, namely, $\pi\mid\eta_{1}$. By Lemma \ref{lem11} we obtain the following linear equations over the finite field $\mathbb{F}_{\pi}$
$$C\eta_{0}+A\eta_{1}+B\eta_{2}\equiv\eta_{1}^2-\frac{p-1}{3}\equiv-\frac{p-1}{3}\pmod\pi,$$
$$B\eta_{0}+C\eta_{1}+D\eta_{2}\equiv\eta_{0}\eta_{1}\equiv0\pmod\pi,$$
$$D\eta_{0}+B\eta_{1}+C\eta_{2}\equiv\eta_{2}\eta_{1}\equiv0\pmod\pi.$$
To solve this system of linear equations in variables $(\eta_{0}, \eta_{1}, \eta_{2})$, we obtain $M\eta_{1}\equiv M'\pmod\pi$ where $M$ is the determinant of coefficient matrix and
 \begin{align*}
M'&=\left|
      \begin{array}{ccc}
        C & -\frac{p-1}{3} & B \\
        B & 0 & D \\
        D & 0 & C \\
      \end{array}
    \right|
=\frac{p-1}{3}\left|
                 \begin{array}{cc}
                   B & D \\
                   D & C \\
                 \end{array}
               \right|
=\frac{p-1}{3}(BC-D^{2})\\
&=\frac{p-1}{3}\left[\frac{1}{18^{2}}(2p-4-c-9d)(2p-4-c+9d)-\frac{1}{9^{2}}(p+1+c)^{2}\right] \\
&=\frac{p-1}{3^{4}}(1-3p-pc).
\end{align*}

From $\pi\mid \eta_{1}$ we get $\frac{p-1}{3^{4}}(1-3p-pc)\equiv M\eta_{1}\equiv0\pmod\pi$. From $4^{p}\equiv-1\pmod\pi$ and $\pi\neq3,5$ we know that order of $4\pmod\pi$ is $2p$. By Fermat's Theorem, $2p\mid\pi-1$, $\pi-1=2pt$ where $t$ is a positive number and then $\pi>p-1$. Therefore $\pi\nmid p-1$ and $1-3p-pc\equiv0\pmod\pi$. Multiple by $2t$, we obtain
$$0\equiv2t(pc+3p-1)\equiv(\pi-1)(c+3)-2t\equiv-(c+3+2t)\pmod\pi.$$
From $p=4x^{2}+27\geq31$, $\pi=1+2tp>2p=62$, $4p=c^{2}+27d^{2}$ we have
$$|c+3+2t|\leq\sqrt{4p}+3+\frac{\pi-1}{p}\leq\sqrt{2\pi}+3+\frac{\pi}{31}<\pi.$$
By $c+3+2t\equiv0\pmod\pi$ we obtain $c+3+2t=0$. Then by $p\equiv c\equiv1\pmod3$  we have
$$\pi=2tp+1=-p(c+3)+1\equiv-pc+1\equiv0\pmod3$$
which is a contradiction since $\pi$ is a prime and $\pi\neq3$. Therefore $d'=1$ and $d_{+}=5^{l}$. At last we determine $d_{+}$. Firstly,
 by the proof of Theorem \ref{th11} we have $$5\parallel 4^{p}+1,\ \text{if} \ p\neq5;\ \text{and} \ 25\mid 4^{5}+1.$$
Then from $$S_w(4)=\sum_{\lambda=0}^{p-1}4^{2\lambda}-2\sum_{\lambda\in D_{1}\cup D_{4}}4^{2\lambda}\equiv p-2|D_{1}\cup D_{4}|=p-\frac{p-1}{3}=\frac{2p+1}{3}\pmod5$$
we obtain
$$5\mid S_{w}(4)\ \text{if and only if}\ p\equiv2\pmod5.$$
Therefore, if $p\equiv2\pmod5$ then $d_{+}=5$ we have $FC_w(4)=\log_4(\frac{4^{2p}-1}{5})$. Otherwise, $5\nmid S_{w}(4)$, then $d_{+}=1$ and $FC_w(4)=\log_4(4^{2p}-1)$.
\end{proof}
\begin{remark}
To resist the attack of the rational algorithm, the $4$-adic complexity of a quaternary sequence $w(a,b)$ with period $N$ should exceed $\frac{N-16}{6}$. From
Theorem \ref{th33} we can see that the quaternary interleaved sequences defined in  \cite{X. Tang} are safe enough to resist the attack of the rational approximation
algorithm. By Theorem \ref{th88}, \ref{th10}, \ref{th11}, \ref{th12}, we can see that by considering $a$ and $b$ to be several types of known binary sequences with ideal autocorrelation ($m$-sequences, twin-prime sequences, Legendre and Hall sequences and their complement, shift or sample sequences), the $4$-adic complexity of a quaternary sequence $w(a,b)$ reaches or nearly reaches the maximum in most cases.
\end{remark}
\section{Conclusion}

Tang and Ding \cite{X. Tang} present a series of quaternary sequences $w(a, b)$ interleaved by two binary sequences $a$ and $b$ with ideal autocorrelation and show that such interleaved quaternary sequences have optimal autocorrelation. In this paper we consider the 4-adic complexity $FC_{w}(4)$ of such interleaved quaternary sequences $w=w(a, b)$. Firstly we present a general formula on $FC_{w}(4)$, $w=w(a, b)$. As a direct consequence, we obtain lower bound $FC_{w}(4)\geq\log_4(4^n-1)$ where $2n$ is the period of $w=w(a, b)$. Then by considering $a$ and $b$ to be several types of known binary sequences ($m$-sequences, twin-prime sequences, Legendre and Hall sequences and their complement, shift or sample sequences), we obtain the exact value of $FC_{w}(4)$. The results show that in most cases, $FC_{w}(4)$ reaches or nearly reaches the maximum value $\log_4(4^{2n}-1)$. Our results show that the quaternary
sequences defined in \cite{X. Tang} are safe enough from the viewpoint of the 4-adic complexity.
Many constructions on binary sequences with period $n\equiv3 \pmod4$ and ideal autocorrelation have been found. It seems to use more skill in order to determine the exact values of the 4-adic complexity of the quaternary sequences interleaved by other types of binary sequences with ideal autocorrelation.

\ifCLASSOPTIONcaptionsoff
  \newpage
\fi

% trigger a \newpage just before the given reference
% number - used to balance the columns on the last page
% adjust value as needed - may need to be readjusted if
% the document is modified later
%\IEEEtriggeratref{8}
% The "triggered" command can be changed if desired:
%\IEEEtriggercmd{\enlargethispage{-5in}}

% references section

% can use a bibliography generated by BibTeX as a .bbl file
% BibTeX documentation can be easily obtained at:
% http://mirror.ctan.org/biblio/bibtex/contrib/doc/
% The IEEEtran BibTeX style support page is at:
% http://www.michaelshell.org/tex/ieeetran/bibtex/
%\bibliographystyle{IEEEtran}
% argument is your BibTeX string definitions and bibliography database(s)
%\bibliography{IEEEabrv,../bib/paper}

\begin{thebibliography}{1}


\bibitem{T. W. Cusick}
T. W. Cusick, C. Ding, and A. Renvall, \emph{Stream Ciphers and Number Theory}, (Revised Edition), Elsevier, 2004.
\bibitem{S. W. Golomb}
S. W. Golomb and G. Gong, \emph{Signal Design for Good Correlation: for Wireless Communication, Cryptography and Radar}, Cambridge University Press, 2005.
\bibitem{G. Gong}
G. Gong, ``Theory and applications of $q$-ary interleaved sequences," \emph{IEEE Trans. Inf. Theory}, vol. 41, no. 2, pp. 400-411, 1995.
\bibitem{jang}
J. Jang, Y. Kim, S. Kim, and J. No, ``New quaternary sequences with ideal autocorrelation constructed from binary sequences with ideal autocorrelation," in \emph{Proc. IEEE Int. Symp.
Inf. Theory}, Seoul, Korea, June 28-July 3, pp. 278-281, 2009.
\bibitem{kim}
Y. Kim, J. Jang, S. Kim, and J. No, ``New construction of quaternary sequences with ideal autocorrelation from  Legendre sequences,'' in \emph{Proc. IEEE Int. Symp.
Inf. Theory}, Seoul, Korea, June 28-July 3, pp. 282-285, 2009.
\bibitem{K4}
A. Klapper, ``A survey of feedback with carry shift registers,'' in \emph{Proc. Sequences and Their Applications}, Seoul, Korea, October 24-October 28, pp. 56-71, 2004.
\bibitem{K1}
A. Klapper and J. Xu, ``Register synthesis for algebraic feedback shift registers based
on non-primes,'' \emph{Designs, Codes, and Crypt.}, vol. 31, no. 3, pp. 227-250, 2004.
\bibitem{Li}
N. Li, X. Tang, and T. Helleseth, ``New $M$-ary sequences with low autocorrelation from interleaved technique," \emph{Designs, Codes and Cryptogr.}, vol. 73, pp. 237-249, 2014.
\bibitem{T. storer}
T. Storer, \emph{Cyclotomy and Difference Sets}, Lecture in Advanced Mathematics,  Chicago, Markham, 1967.
\bibitem{Su}
W. Su, Y. Yang, Z. Zhou, and X. Tang, ``New quaternary sequences of even length with optimal autocorrelation," \emph{Sci. China Inf. Sci.,} vol. 61, no. 2: 022308:1-022308:13, 2018.
\bibitem{X. Tang}
X. Tang and C. Ding, ``New classes of belanced quartenary and almost balanced binary sequences with optimal autocorrelation value," \emph{IEEE Trans. Inf. Theory}, vol. 56, no. 2, pp. 6398-6405, 2010.

\end{thebibliography}
%
% <OR> manually copy in the resultant .bbl file
% set second argument of \begin to the number of references
% (used to reserve space for the reference number labels box)

% that's all folks

\end{document}